\newcommand{\rev}[1]{#1\xspace}
\crefname{section}{Sect.}{Sects.}
\Crefname{section}{Section}{Sections}
\crefname{figure}{Fig.}{Figs.}
\Crefname{figure}{Figure}{Figures}
\crefname{question}{Question}{Questions}
\Crefname{question}{Question}{Questions}
\newif\ifreview
\newcommand\wise{WiSE}
\newcommand\pywise{PyWiSE}
\newcommand\cstore{V}
\newcommand\sstore{S}
\newcommand\pathcond{\varphi}
\newcommand\ltlformula{\varphi}
\newcommand\prog{p}
\newcommand\progb{q}
\newcommand\cstate{\sigma}
\newcommand\sstate{{\hat{\sigma}}}
\newcommand\stream{\alpha}
\newcommand{\always}{\text{\syn{\(\square\)}}}
\newcommand{\eventually}{\text{\syn{\(\lozenge\)}}}
\newcommand{\ltlimplies}{\text{\syn{\(\to\)}}}
\newcommand{\hasBug}{\textit{HasBug}}
\newcommand{\hasBugSym}{\textit{HasBug}_\textrm{sym}}
\newcommand{\reach}{\textit{Reach}}
\newcommand{\reachsym}{\textit{Reach}_\textrm{sym}}
\newcommand{\reachableFrom}{\textit{ReachableFrom}}
\newcommand{\validStatus}{\textit{ValidStatus}}
\newcommand{\findBugs}{\texttt{find\_bugs}}
\newcommand{\concrete}{\textit{Concrete}}
\newcommand{\symbolic}{\textit{Symbolic}}
\newcommand{\stateconcr}[1]{\simeq_{#1}}
\newacronym{se}{SE}{Symbolic Execution}
\newacronym{ai}{AI}{Abstract Interpretation}
\newacronym{ltl}{LTL}{Linear Temporal Logic}
\newenvironment{tightcenter}{%
  \setlength\topsep{0pt}
  \setlength\parskip{0pt}
  \begin{center}
}{%
  \end{center}
}
\newcommand\python[1]{\mintinline{python}{#1}}
\xpretocmd{\proof}{\setlength{\parindent}{0pt}}{}{}
\definecolor{mybrown}{HTML}{974006}
\newcommand\syn[1]{\texttt{\color{mybrown}{#1}}}
\newcommand\st[2]{\ensuremath{\langle #1, #2 \rangle}}
\newcommand\sst[3]{\ensuremath{\langle #1, #2, #3 \rangle}}
\newcommand\den[1]{\ensuremath{\llbracket #1 \rrbracket}}
\newcommand\code[1]{{\ttfamily #1}}
\theoremstyle{coqtheorem}
\crefname{theorem}{Thm.}{Thms.}
\Crefname{theorem}{Theorem}{Theorems}
\author{Arthur Correnson}
\email{arthur.correnson@cispa.de}
\author{Dominic Steinhöfel}
\email{dominic.steinhoefel@cispa.de}
\affiliation{%
  \institution{CISPA Helmholtz Center for Information Security}
  \city{Saarbrücken}
  \country{Germany}
}
\keywords{%
  Symbolic Execution,
  Testing,
  Program Verification,
  Symbolic Semantics,
  Proof Assistants
}
\begin{document}
  \title{Engineering a Formally Verified Automated Bug Finder}

  \begin{abstract}
    Symbolic execution is a program analysis technique executing programs with symbolic instead of concrete inputs.
    This principle allows for exploring many program paths at once.
    Despite its wide adoption---in particular for program \emph{testing}--little effort was dedicated to studying the semantic foundations of symbolic execution.
    Without these foundations, critical questions regarding the correctness of symbolic executors cannot be satisfyingly answered:
    Can a reported bug be reproduced, or is it a \emph{false positive} (soundness)?
    Can we be sure to find \emph{all bugs} if we let the testing tool run long enough (completeness)?
    This paper presents a systematic approach for engineering provably sound and complete symbolic execution-based bug finders by relating a programming language's operational semantics with a symbolic semantics.
    In contrast to prior work on symbolic execution semantics, we address the correctness of critical implementation details of symbolic bug finders, including the search strategy and the role of constraint solvers to prune the search space.
    We showcase our approach by implementing WiSE, a prototype of a verified bug finder for an imperative language, in the Coq proof assistant and proving it sound and complete.
    We demonstrate that the design principles of \wise{} survive outside the ecosystem of interactive proof assistants by (1) automatically extracting an OCaml implementation and (2) transforming \wise{} to \pywise{}, a functionally equivalent Python version.
  \end{abstract}


  \maketitle



  \section*{Introduction}

  \epigraph{%
    It is now two decades since it was pointed out that program testing may convincingly demonstrate the presence of bugs, but can never demonstrate their absence.
    After quoting this well-publicized remark devoutly, the software engineer returns to the order of the day and continues to refine his testing strategies, just like the alchemist of yore, who continued to refine his chrysocosmic purifications.
  }{%
    \textit{E.W.\ Dijkstra}, 1988~\cite{dijkstra-88}
  }

  A bridge that works fine for a year but collapses during a windy day is considered a failure.
  A program that works fine for a year but crashes for a certain input is considered buggy.
  This ridiculous analogy reflects the reality of software engineering:
  While civil engineers carefully identify loads and associated stresses on the structures they build and create their designs accordingly, preventing tragic crashes upfront, ``bugs'' in software are most frequently reported by \emph{users} only after deployment~\cite{rollbar-21}.
  However, software engineers are in a fortunate position:
  They can stress-test the \emph{final product} before it is shipped to users.
  This becomes an opportunity only if programmers possess adequate verification tools.
  The predominant verification discipline today is \emph{testing}, which aims to find bugs that are \emph{actionable} in the sense that running the program under test with an \emph{input} provided by the testing tool results in an error state.
  For example, \emph{evolutionary graybox fuzzers} (such as AFL++~\cite{fioraldi.maier.ea-20}) mutate inputs from an initial population and add mutants covering new parts in the program.
  While modern fuzzers work exceptionally well in many cases, the effectiveness of evolutionary fuzzers strongly depends on the quality of the initial seed inputs~\cite{liyanage.boehme.ea-23}.
  Furthermore, they struggle to cover code guarded with complex constraints.
  Consider the Python program
  \begin{tightcenter}
    \python{if x == hash((y, z)): fail()}
  \end{tightcenter}
  Finding a triple of values satisfying the \python{if} condition randomly or by mutating previous inputs is extremely difficult.
  Indeed, Bundt et al.~\cite{bundt.fasano.ea-21} demonstrated that graybox fuzzers are ``mostly useless'' for certain bugs.
  However, they state that integrating \emph{symbolic execution} with graybox fuzzers to so-called \emph{hybrid fuzzers} is \emph{highly effective,} a result also confirmed by other studies (e.g.,~\cite{boehme.paul-16}). 

  \gls{se} builds on the principle of declaring some (or all) inputs to a program to be \emph{symbolic}.
  While concrete execution maintains states mapping variables to values, \gls{se} uses \emph{symbolic states} consisting of a \emph{path condition} and a \emph{symbolic store.}
  At \emph{branching points} such as the \python{if} statement from above, \gls{se} generates multiple successor states distinguished by different constraints added to their path condition.
  As usual, \emph{assignments} update the symbolic store, which can map to symbolic values.
  \gls{se} was conceived almost simultaneously as a bug finding~\cite{king-76} and program proving technique~\cite{burstall-74} and is still actively used in both areas.
  It is more popular in testing, though, due to its \emph{precise nature:}
  Under \emph{ideal assumptions}, \gls{se} yields no false positives and can be used in automated testing campaigns;
  on the other hand, program \emph{proving} with \gls{se} requires expensive auxiliary specifications~\cite{baumann.beckert.ea-12}. 
  
  So, what are these ``ideal assumptions?'' 
  First, \gls{se} is a \emph{whitebox} technique:
  It requires access to the source code of the tested program, interpreting its structure to drive the search for bugs.
  While this permits better results, implementing a symbolic executor requires a precise understanding of the semantics of the targeted language.
  For example, a symbolic executor for Python programs must respect Python's integer \emph{floor division} semantics:
  The division expression \code{x // y} corresponds to \code{floor(x / y)}, where ``\code{/}'' is division on reals. 
  Other interpretations can result in \textit{false positives} (spurious bugs) or \textit{false negative} (missed bugs).
  In the program
  \begin{tightcenter}
    \mintinline{python}{if x // y == -2: fail()}
  \end{tightcenter}
  a wrong interpretation of integer division might lead a symbolic bug finder to believe that $x = -5,\, y = 2$ triggers \code{fail()}.
  However, in Python, the result of \code{-5 // 2} is \(-3\), while in other languages such as Java it evaluates to \(-2\).
  
  Second, \gls{se} discards unreachable erroneous states by checking whether their path conditions are satisfiable.
  In our examples from above, it must decide if ``\mintinline{python}{x == hash((y, z))}'' or ``\code{x // y == -2}'' are consistent with the previously collected constraints.
  To that end, the \gls{se} engines use \emph{constraint solvers} to determine whether a set of constraints is satisfiable (\emph{SAT}) or unsatisfiable (\emph{UNSAT}).
  However, instead of returning \emph{SAT} or \emph{UNSAT}, solvers frequently \emph{time out} providing no answers; in that case, blindly reporting a bug could result in a \emph{false positive.}

  Consequently, an \gls{se} engine is more difficult to implement correctly than a fuzzer.
  While the semantics and correctness of \gls{se} have been studied in different contexts~\cite{steinhoefel-20,boer.bonsangue-19,lucanu.rusu.ea-17,kneuper-91}, the correctness of symbolic \emph{bug finders} remains an open research topic. 
  In particular, to our knowledge, there exists no rigorous definition of what constitutes a trustworthy symbolic bug finder.

  The main correctness criterion for testing tools is the ``soundness of bugs''~\cite{godefroid-05}:
  If the tool answers \emph{Is there a bug in my program?} with \emph{yes}, there must exist a bug witness (an actionable input).
  We call this property soundness for testing (\emph{T-soundness}).
  \emph{Fuzz testers} (short \emph{fuzzers}) generate inputs that they feed to the program under test and report a bug if the program crashes.
  Such tools are T-sound by construction.
  However, T-soundness alone is an insufficient quality measure for bug finders.
  Indeed, a tester never reporting \emph{any} bug is trivially T-sound.
  Thus, the \emph{converse} of soundness is an additionally desirable property:
  If a bug witness exists, a tester should eventually answer the question \emph{Is there a bug in my program?} with \emph{yes}.
  This property, which we name completeness for testing (\emph{T-completeness}), is generally out of reach for fuzzers.
  The best option is to estimate the \emph{residual risk} that an ongoing fuzzing campaign would discover an error if it only ran for a bit longer~\cite{boehme.liyange.ea-21}.
 
  Contrary, building a T-sound and T-complete symbolic executor \emph{is} possible.
  In this paper, we propose a general approach to building provably sound and complete bug finders based on symbolic execution.
  We implemented a prototypical bug finder based on this approach in the Coq proof assistant together with an end-to-end proof of its correctness.
  To the best of our knowledge, ours is the first endeavor to explicitly specify the \emph{meaning of correctness in the context of symbolic bug finders} \rev{in a proof assistant} and to use \emph{formal verification techniques} to prove the correctness of a bug finder.

  \rev{
  \paragraph*{Research Questions}

  Precisely, we answer the following main questions:
  \begin{enumerate*}[ref=(\arabic*)]
    \item \label[question]{q1}\emph{What constitutes a reliable testing tool?}, and
    \item \label[question]{q2}\emph{How can we engineer such a tool?}
  \end{enumerate*}
  These questions \emph{cannot be answered experimentally.}
  Following the terminology of Shaw~\cite{shaw-03}, they belong to the categories ``Method for Evaluation'' (\cref{q1}) and ``Method/Means of Development'' (\cref{q2}) of software engineering research questions.
  Our answer to \cref{q1} is the precise definition of T-soundness and T-completeness.
  Addressing \cref{q2}, we explain the characteristics of a T-sound and T-complete bug finder and expound a methodology to implement such a tool.
  Following this methodology, we implemented SE-based testing tools in Coq and Python.
  For the Coq prototype, we derived a formal, mechanized correctness proof.
  }
  

  \paragraph*{Related Work} 

  Kapus and Cadar~\cite{kapus.cadar-17} combine compiler testing techniques and different oracles to test three symbolic bug finders.
  They reported 20 distinct bugs ``exposing a variety of important errors''---e.g., related to division.
  This is the only work on the correctness of (symbolic) bug finders \rev{(\cref{q1})} we are aware of.
  \rev{Other work mainly studies how \gls{se} can be \emph{applied} for automated testing.}
  The strongest oracle used by Kapus and Cadar, the ``output oracle,'' compares the results of a concrete execution and a symbolic one following the same path.
  This oracle only works if the initial path condition \rev{uniquely describes the inputs used for concrete execution,} but is, at the same time, ``obfuscated'' to prevent the executor from dropping to ``concrete mode.''
  When testing \gls{se} engines with less constrained inputs, such that multiple paths are explored, Kapus and Cadar use coarser oracles.
  In contrast, we developed a \emph{verified} symbolic bug finder based on carefully designed semantic foundations \rev{in response to \cref{q2}}.
  Likely, Kapus and Cadar would have found more bugs when investing even more work on test cases and oracles; our bug finder is guaranteed to be bug-free.

  In formal program verification, it is much more common to address the soundness of verification tools and their underlying theoretical underpinnings.
  The goal of formal verification is not to show the presence but \emph{prove the absence} of bugs.
  Consequently, the main correctness criterion is the soundness of proofs (P-soundness):
  If a verifier answers \textit{Is my program safe to execute?} with \textit{yes}, then there exists a proof of the program's correctness.
  As with T-completeness, P-completeness is the converse of P-soundness.
  Since checking a complex program proof is arguably more complex then checking whether an input causes a bug, the formal verification community put significant effort into justifying their tools and foundations.
  For example, \gls{ai}~\cite{CousotCousot77-1} is a mathematical framework for designing and implementing correct static analysis.
  Following the \gls{ai} framework, Jourdan et al.~\cite{jourdan.laporte.ea-15} proved the soundness of an abstract interpreter for C programs in the Coq proof assistant.
  This project is strongly related to the development of CompCert~\cite{leroy-09}, a C compiler also proven correct in Coq.
  Coq itself is based on an established meta logic; its kernel was proven correct~\cite{sozequ.boulier.ea-20}.

  Closing the circle to symbolic execution, there have been various efforts to show \emph{P-soundness} of \gls{se}-based verifiers~\cite{ahrendt.roth.ea-05,keuchel.huyghebaert.ea-22,vogels.jacobs.ea-15}.
  Furthermore, the semantics of \gls{se} was addressed in the past~\cite{steinhoefel-20,boer.bonsangue-19,lucanu.rusu.ea-17,kneuper-91} in a more general context.
  The symbolic semantics used in this paper is inspired by the semantics proposed by de Boer and Bonsangue~\cite{boer.bonsangue-19}.
  Building on these foundations, we define soundness and completeness in the context of symbolic bug finding and provide a formally verified implementation.

  \paragraph*{Contributions}
  
  Following the CompCert approach~\cite{Robert_Leroy_CPP2012,10.1007/978-3-642-38856-9_18} to build verified program analyzers in a proof assistant, we develop a prototype of a formally verified symbolic bug finder in Coq.
  We formalize the concrete operational semantics of a small imperative programming language and derive a symbolic operational semantics inspired by de Boer and Bonsangue~\cite{boer.bonsangue-19} (\cref{sec:concrete_and_symbolic_semantics}). 
  Then, we extensively explore the relations between concrete and symbolic semantics by proving that \gls{se} is a sound and complete method for reachability analysis and various bug-finding tasks (\cref{sec:finding_bugs}).
  Building on these theoretical results, we derive a T-sound and T-complete symbolic bug finder (\cref{sec:trustworthy_bug_finder}).
  In the course of this, we formally address several practical questions such as \emph{path selection} and \emph{search space pruning} using constraint solvers and their influence on \emph{exhaustiveness} (i.e., T-completeness) and \emph{precision} (i.e., T-soundness) of \gls{se}.
  All implementations, definitions, theorems, and proofs in this paper are realized in Coq.
  Thus, if our \emph{specifications} convince you, you can \emph{blindly trust the proofs.}
  We call our prototypical symbolic bug finder \wise{} (``What is Symbolic Execution'').
  To demonstrate the practicability of our \gls{se} engine's design principles, we (1) extract an executable OCaml implementation fully automatically from the Coq code and (2) manually transform \wise{} to \pywise{}, a \wise{} implementation written in Python (\cref{sec:case_studies}).
  Both extracted bug finders can be invoked on the command line. 
  To provide hands-on evidence that \wise{} works---additionally to our formal proofs---we apply it to find bugs inserted into numeric algorithms. 

  \section{Concrete and Symbolic Semantics}%
  \label{sec:concrete_and_symbolic_semantics}

  We study symbolic execution along the example of IMP, a small imperative programming language with loops and integer arithmetic.
  In this section, we introduce the IMP language (\cref{sub:imp}) and describe its operational semantics (\cref{sub:imp_concrete_semantics}).
  Afterward, we equip IMP with a symbolic operational semantics following de Boer and Bonsangue~\cite{boer.bonsangue-19} (\cref{sub:imp_symbolic_semantics}).
  The symbolic semantics will be the formal basis to justify the correctness of our \gls{se} engine.

  \paragraph*{Notation}

  We use the following notational conventions to clarify the domains of identifiers in our formalizations.
  \begin{itemize}
    \item \(\prog\), \(\progb\), \dots{} for \emph{programs}
    \item \(\cstore\) for \emph{concrete environments}, \(\sstore\) for \emph{symbolic stores}
    \item \(\pathcond\) for \emph{path conditions} or \emph{\glstext{ltl} formulas}
    \item \(\cstate\) for \emph{concrete states} \(\langle{}V,p\rangle\), \(\sstate\) for \emph{symbolic states} \(\langle\pathcond,\sstore,\prog\rangle\)
    \item \(\stream\) for \emph{streams} (e.g., of symbolic states)
  \end{itemize}
  
  \subsection{A Small Imperative Programming Language}
  \label{sub:imp}

  IMP is an imperative programming language with assignments, while loops, and conditionals.
  \Cref{fig:imp_syntax} contains a grammar of IMP's concrete syntax; \(\mathit{VAR}\) denotes a variable identifier and \(\mathit{INT}\) an integer.
  We integrate a \syn{fail} statement signaling an error.
  Using \syn{fail}, one can instrument source code to, e.g., mark supposedly unreachable sections or model run-time assertions.
  IMP does not natively support multiplication and division;
  these operators can be implemented in terms of addition and subtraction.
  
  \begin{figure}[tb]
    \begin{bnfgrammar}
      stmt : \text{Statements} ::= \syn{skip}
      | \syn{fail}
      | VAR \syn{=} aexpr
      | stmt \syn{;} stmt
      | \syn{while} bexpr \syn{do} stmt \syn{od}
      | \syn{if} bexpr \syn{then} stmt \syn{else} stmt \syn{fi}
      ;;
      bexpr : Expressions ::= \syn{true} || \syn{false}
      | bexpr ( \syn{and} \(\vert\) \syn{or} ) bexpr
      | \syn{not} bexpr
      | aexpr ( \syn{==} \(\vert\) \syn{<=} \(\vert\) \syn{<} \(\vert\) \syn{>=} \(\vert\) \syn{>} ) aexpr
      ;;
      aexpr ::= INT || VAR
      | aexpr ( \syn{+} \(\vert\) \syn{-} ) aexpr 
    \end{bnfgrammar}
    \Description{Syntax of IMP}%
    \caption{Syntax of IMP}%
    \label{fig:imp_syntax}
  \end{figure}

  \begin{figure}[htb]
    \begin{mathpar}
      \inferrule{}{\st{V}{x \ \syn{=} \ e} \to \st{V[x := \den{e}_V]}{\syn{skip}}}\\
      \inferrule{\den{b}_V = \syn{true}}{\st{V}{\syn{if} \ b \ \syn{then} \ s_1 \ \syn{else} \ s_2\ \syn{fi}} \to \st{V}{s_1}}\\
      \inferrule{\den{b}_V = \syn{false}}{\st{V}{\syn{if} \ b \ \syn{then} \ s_1 \ \syn{else} \ s_2\ \syn{fi}} \to \st{V}{s_2}}\\
      \inferrule{\st{V_1}{s_1} \to \st{V_2}{s_2}}{\st{V_1}{s_1 \ \syn{;} \ s_3} \to \st{V_2}{s_2 \ \syn{;} \ s_3}}\\
      \inferrule{}{\st{V}{\syn{skip} \ \syn{;} \ s} \to \st{V}{s}}\\
      \inferrule{\den{b}_V = \syn{true}}{\st{V}{\syn{while} \ b \ \syn{do} \ s\ \syn{od}} \to \st{V}{s \ \syn{;} \ \syn{while} \ b \ \syn{do} \ s\ \syn{od}}}\\
      \inferrule{\den{b}_V = \syn{false}}{\st{V}{\syn{while} \ b \ \syn{do} \ s\ \syn{od}} \to \st{V}{\syn{skip}}}\\
      \den{x}_V := V(x), \ x\in \textrm{Vars}\\
      \den{\syn{true}}_V := \syn{true} \qquad 
      \den{\syn{false}}_V := \syn{false}\\
      \den{e_1 \ \syn{+} \ e_2}_V := \den{e_1}_V + \den{e_2}_V \qquad \den{b_1 \ \syn{and} \ b_2}_V := \den{b_1}_V \wedge \den{b_2}_V\\
      \ldots
    \end{mathpar}
    \Description{Operational Semantics of IMP}%
    \caption{Operational Semantics of IMP}%
    \label{fig:concrete}
  \end{figure}

  \subsection{Concrete Operational Semantics}%
  \label{sub:imp_concrete_semantics}

  IMP programs are executed in an \emph{environment} $\cstore : \textrm{Vars} \to \mathbb{Z}$ representing the values assigned to each program variable.
  Execution states are pairs $\langle V, p\rangle$ of an environment and a program that remains to be executed.
  The small-step operational semantics for IMP (\cref{fig:concrete}) is a relation between execution states.
  We use the notation $\langle V, p\rangle \to \langle V', p'\rangle$ to state that executing program $p$ in environment $V$ for one step produces a new environment $V'$ together with a program $p'$ that remains to be executed.
  The relation ``\(\to^*\)'' is the reflexive, transitive closure of ``\(\to\).''
  We use the notation $\llbracket e \rrbracket_V$ to denote the value of the expression $e$ in environment $V$.
  The operational semantics is presented in \cref{fig:concrete}.

  \subsection{Symbolic Operational Semantics}%
  \label{sub:imp_symbolic_semantics}

  Informally, a program is symbolically executed by using symbolic placeholders instead of concrete values as inputs.
  When a statement with different possible outcomes is reached (in IMP, this can be an \syn{if} statement or a loop), \emph{both} possibilities are independently explored in different execution branches.
  Symbolic execution thus produces in a \emph{tree} representing all possible execution paths.
  Each tree node is labeled with a formula (called the \emph{path condition}), which characterizes the condition on the program inputs that needs to be met for the program to reach the associated state.

  More formally, we regard \gls{se} as a type of program execution respecting a \emph{symbolic} operational semantics ``$\to_{\textrm{sym}}$'' (\cref{fig:symbolic}).
  In \gls{se}, symbolic \emph{stores} \(S : \textrm{Vars} \to \textit{expr}\) take the role of environments.
  Contrary to the latter, symbolic stores map variables to symbolic expressions and not numeric values.
  The dual to concrete execution states are \emph{symbolic states}, triples $\langle \varphi, S, p\rangle$ of a formula \(\varphi\) (the \emph{path condition}), a \emph{symbolic store} \(S\), and a program to be executed \(p\) (the \emph{program counter}).
  We extend the concrete evaluation of an expression $e$ in a store $V$ to a symbolic evaluation $\den{e}_S$.
  The result of \(\den{e}_S\) is again a symbolic expression obtained by replacing variables in \(e\) with their associated expressions in \(S\), if any.

  \begin{figure}[tb]
    \begin{mathpar}
      \inferrule{}{\sst{\varphi}{S}{x \ \syn{=} \ e} \to_{\textrm{sym}} \sst{\varphi}{S[x := \den{e}_S]}{\syn{skip}}}\\
      \inferrule{}{\sst{\varphi}{S}{\syn{if} \ b \ \syn{then} \ s_1 \ \syn{else} \ s_2\ \syn{fi}} \to_{\textrm{sym}} \sst{\varphi \ \syn{and} \ \den{b}_S}{S}{s_1}}\\
      \inferrule{}{\sst{\varphi}{S}{\syn{if} \ b \ \syn{then} \ s_1 \ \syn{else} \ s_2\ \syn{fi}} \to_{\textrm{sym}} \sst{\varphi \ \syn{and} \ (\syn{not} \ \den{b}_S)}{S}{s_2}}\\
      \inferrule{\sst{\varphi_1}{S_1}{s_1} \to \sst{\varphi_2}{S_2}{s_2}}{\sst{\varphi_1}{S_1}{s_1 \ \syn{;} \ s_3} \to_{\textrm{sym}} \sst{\varphi_2}{S_2}{s_2 \ \syn{;} \ s_3}}\\
      \inferrule{}{\sst{\varphi}{S}{\syn{skip} \  \syn{;} \ s} \to_{\textrm{sym}} \sst{\varphi}{S}{s}}\\
      \inferrule{}{\sst{\varphi}{S}{\syn{while} \ b \ \syn{do} \ s\ \syn{od}} \to_{\textrm{sym}} \sst{\varphi \ \syn{and} \ \den{b}_S}{S}{s \ \syn{;} \ \syn{while} \ b \ \syn{do} \ s\ \syn{od}}}\\
      \qquad\inferrule{}{\sst{\varphi}{S}{\syn{while} \ b \ \syn{do} \ s\ \syn{od}} \to_{\textrm{sym}} \sst{\varphi \ \syn{and} \ (\syn{not} \ \den{b}_S)}{S}{\syn{skip}}}
    \end{mathpar}
    \Description{Symbolic Semantics of IMP}%
    \caption{Symbolic Semantics of IMP}%
    \label{fig:symbolic}
  \end{figure}

  The symbolic execution of a program $p$ usually starts in the initial state $\sst{\syn{true}}{id}{p}$, where $id := x \mapsto x$ is an ``empty'' symbolic store assigning to each program variable a symbol of the same name.
  The initial path condition $\syn{true}$ indicates that we make no assumptions about the program inputs.
  We write $\sst{\syn{true}}{id}{p} \to^*_\textrm{sym} \sst{\varphi}{S}{p'}$ to denote that the symbolic state $\sst{\varphi}{S}{p'}$ can be reached by symbolic execution of $p$.
  \Cref{fig:example_symbolic_execution} shows a symbolic execution tree for the execution of a conditional statement in the default initial state.

  \begin{figure}[htb]
    \centerline{\begin{tikzpicture}
      \node[shape=rectangle,draw=black] (A) at (0,0) {$\sst{\syn{true}}{x \mapsto x}{\syn{if} \ \texttt{x} < 0 \ \syn{then} \ \syn{fail} \ \syn{else} \ \texttt{x} \ \syn{=} \ \texttt{x} \ \syn{-} \ 1\ \syn{fi}}$};
      \node[shape=rectangle,draw=black] (B) at (-2,-1) {$\sst{\texttt{x} \ \syn{<} \ 0}{x \mapsto x}{\syn{fail}}$};
      \node[shape=rectangle,draw=black] (C) at (2,-1) {$\sst{\texttt{x} \ \syn{>=} \ 0}{x \mapsto x}{\texttt{x} \ \syn{=} \ \texttt{x} \ \syn{-} \ 1}$};
      \node[shape=rectangle,draw=black] (D) at (2,-2) {$\sst{\texttt{x} \ \syn{>=} \ 0}{x \mapsto x - 1}{\syn{skip}}$};

      \path [->] (A) edge node[left] {\smaller $\textrm{sym}$} (B);
      \path [->] (A) edge node[right] {\smaller $\textrm{sym}$} (C);
      \path [->] (C) edge node[right] {\smaller $\textrm{sym}$} (D);
    \end{tikzpicture}}
    \Description{Example Symbolic Execution Tree}%
    \caption{Example Symbolic Execution Tree}%
    \label{fig:example_symbolic_execution}
  \end{figure}

  \section{Finding Bugs Symbolically}%
  \label{sec:finding_bugs}
  
  A symbolic executor \textit{simulates} many concrete executions of a program simultaneously.
  From this, we can obtain a bug finder by reporting an error whenever one simulated execution reaches an error state.
  Every reported bug must come with an \textit{witness}.
  In fuzzing, this is a bug-triggering input; in the case of \gls{se}, it is frequently a more abstract witness in the form of a boolean formula describing \emph{all} inputs triggering the bug.
  Additionally, we would like to ensure that any bug in the program will eventually be detected during \gls{se}.
  In this section, formally connect bug finding and symbolic execution.
  In particular, we lift the symbolic semantics from \cref{sub:imp_symbolic_semantics} to a sound and complete method to find bugs.

  \subsection{Sound \& Complete Reachability}

  Our first step toward a symbolic bug finder is to connect symbolic and concrete executions.
  The glue we use to establish this connection is \emph{reachability.}
  A correct symbolic executor should \emph{only} reach symbolic states representing reachable concrete states.
  Conversely, we would like it to reach \emph{any} reachable concrete state in the form of a symbolic state representing it.
  
  We write $\textit{Reach}(\prog, \cstate)$ and $\textit{Reach}_\textrm{sym}(\prog, \cstate)$ to express that concrete state $\cstate$ can be reached by concrete and symbolic execution of the program $\prog$.
  Then, we express soundness and completeness concerning reachability as a simple equivalence:

  \begin{theorem}[Soundness and Completeness for Reachability]
    \hspace*{10pt}$\textit{Reach}_\textrm{sym}(\prog, \cstate) \Leftrightarrow \textit{Reach}(\prog, \cstate)$
  \end{theorem}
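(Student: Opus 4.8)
The plan is to prove the two directions of the equivalence by lifting a single one-step correspondence between the two semantics to their reflexive-transitive closures $\to^*$ and $\to^*_\textrm{sym}$. First I would fix the meaning of the two reachability predicates: $\reach(\prog, \cstate)$ holds when there is an initial environment $\cstore_0$ with $\st{\cstore_0}{\prog} \to^* \cstate$, and $\reachsym(\prog, \cstate)$ holds when there is a reachable symbolic state, $\sst{\syn{true}}{\textit{id}}{\prog} \to^*_\textrm{sym} \sst{\pathcond}{\sstore}{\progb}$, together with a valuation $\cstore$ satisfying $\pathcond$ (written $\cstore \models \pathcond$) such that $\sst{\pathcond}{\sstore}{\progb} \stateconcr{\cstore} \cstate$. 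Here $\sstate \stateconcr{\cstore} \cstate$ means the two program counters agree and $\cstate$'s environment equals the concretization $\cstore \storeconcr \sstore$, i.e. $x \mapsto \den{\sstore(x)}_{\cstore}$.

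The technical heart is a substitution lemma stating that symbolic evaluation commutes with concretization: for every arithmetic or boolean expression $e$ and store $\sstore$, $\den{e}_{\cstore \storeconcr \sstore} = \den{\den{e}_{\sstore}}_{\cstore}$. I would prove it by straightforward structural induction on $e$, the variable case being exactly the definition of $\storeconcr$ and the compound cases following from the homomorphic clauses of $\den{\cdot}$. With this lemma the branch conditions behave well: for any $\pathcond$, $b$, $\sstore$, and $\cstore$, we have $\cstore \models \pathcond \ \syn{and} \ \den{b}_{\sstore}$ exactly when $\cstore \models \pathcond$ and $\den{b}_{\cstore \storeconcr \sstore} = \syn{true}$, and dually for the negated branch.

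Next I would establish the key one-step simulation under a \emph{fixed} valuation $\cstore$ (symbolic execution of IMP introduces no fresh symbols, so one $\cstore$ serves an entire path). Soundness: if $\sst{\pathcond}{\sstore}{\progb} \to_\textrm{sym} \sst{\pathcond'}{\sstore'}{\progb'}$ and $\cstore \models \pathcond'$, then $\cstore \models \pathcond$ and $\st{\cstore \storeconcr \sstore}{\progb} \to \st{\cstore \storeconcr \sstore'}{\progb'}$. Completeness: if $\cstore \models \pathcond$ and $\st{\cstore \storeconcr \sstore}{\progb} \to \cstate'$, then there is a symbolic successor with $\sst{\pathcond}{\sstore}{\progb} \to_\textrm{sym} \sst{\pathcond'}{\sstore'}{\progb'}$, $\cstore \models \pathcond'$, and $\sst{\pathcond'}{\sstore'}{\progb'} \stateconcr{\cstore} \cstate'$. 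Both are proved by induction on the step derivation, matching each symbolic rule against the concrete rule of the same shape: the assignment case is closed by the substitution lemma, and the branch cases by the satisfiability characterization above.

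The subtle case, and the one I expect to be the main obstacle, is the congruence rule for sequential composition, $\tfrac{\st{\cstore_1}{s_1} \to \st{\cstore_2}{s_2}}{\st{\cstore_1}{s_1;s_3} \to \st{\cstore_2}{s_2;s_3}}$, whose symbolic counterpart likewise carries a recursive premise. This is exactly why the one-step lemma must be proved by induction on the derivation rather than by flat case analysis: to simulate a step of $s_1;s_3$ one appeals to the induction hypothesis on the sub-step of $s_1$ and then re-applies the sequencing rule, checking that concretization leaves the unaffected continuation $s_3$ unchanged. Once the one-step lemma is available, I would lift each direction to the transitive closure by induction on the length of the execution sequence, the reflexive case being immediate. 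Finally, unfolding the two reachability predicates and instantiating at the initial states, using $\cstore \storeconcr \textit{id} = \cstore$ and $\cstore \models \syn{true}$, yields the two implications and hence the equivalence.
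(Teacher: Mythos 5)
Your proposal is correct and follows essentially the same route as the paper: a substitution/compositionality lemma proved by structural induction on expressions (the paper's \cref{coq:comp_update}), one-step simulation lemmas in each direction proved by induction on the step derivation (the paper's \cref{coq:sym_step_step} and \cref{coq:step_simulation_diagram}), lifted to $\to^*$ and instantiated at $\st{\cstore_0}{\prog} \simeq_{\cstore_0} \sst{\syn{true}}{id}{\prog}$. The only cosmetic difference is that you fold path-condition monotonicity ($\cstore \models \pathcond'$ implies $\cstore \models \pathcond$) into the one-step soundness lemma, whereas the paper states it as a separate multi-step lemma (\cref{coq:sym_steps_path}) used when iterating.
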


  $\textit{Reach}(\prog, \cstate)$ can be expressed using the concrete operational semantics.
  A state $\cstate$ is reachable for a program $\prog$ if executing $\prog$ in some initial store $\cstore_0$ eventually leads to $\cstate$.

  \begin{definition}[Concrete Reachability]
    \hspace*{10pt}$\textit{Reach}(\prog, \cstate) \triangleq \exists \cstore_0, \st{\cstore_0}{\prog} \to^* \cstate$
  \end{definition}

  Symbolic states reached during \gls{se} represent \emph{many} concrete states.
  To make \(\reachsym(\prog,\cstate)\) precise, we relate them to their ``concretizations.'' 
  The concretization operation $\circ$ turns a symbolic store $\sstore$ into a concrete store $\cstore'=\cstore \circ \sstore$ by replacing all free variables in $\sstore$ by their associated values in $\cstore$.
  For example, if \(\sstore=\{x\mapsto{}y,y\mapsto{}y\}\) and \(\cstore=\{x\mapsto{}1,y\mapsto{}2\}\), we obtain \(\cstore'=\{x\mapsto{}2,y\mapsto{}2\}\). 

  \begin{definition}[][comp]
    \hspace*{10pt}$(V \circ S)(x) := \den{S(x)}_V$
  \end{definition}

  We lift the \emph{store} concretization ``\(\circ\)'' to the concretization of symbolic \emph{states}.
  To express that a symbolic state \(\sstate\) represents a concrete state \(\cstate\) via an initial store \(\cstore_0\), we write \(\cstate\stateconcr{\cstore_0}\sstate\).
  Any concretization \(\cstate\) of \(\sstate\) must have the \emph{same program counter} as \(\cstate\).
  Furthermore, the initial store \(\cstore_0\) must \emph{satisfy the path condition} of \(\sstate\).
  Finally, \(\cstate\)'s environment must be a concretization of \(\sstate\)'s store via \(\cstore_0\).

  \begin{definition}[][sim]
    \hspace*{10pt}$\st{V}{p} \simeq_{V_0} \sst{\varphi}{S}{p'} \triangleq p = p' \wedge V = V_0 \circ S \wedge \den{\varphi}_{V_0} = \syn{true}$
  \end{definition}

  Now, all \emph{concretizations of reachable symbolic states} are symbolically reachable.
  Formally:

  \begin{definition}
    \hspace*{10pt}$\reachsym(\prog, \cstate) \triangleq \exists \cstore_0, \exists \sstate, \sst{\syn{true}}{id}{p} \to^*_\textrm{sym} \sstate \wedge \cstate \stateconcr{\cstore_0} \sstate$
  \end{definition}

  In the remainder of this section, we describe the proofs of both directions of Theorem~2.1 for the interested reader.
  Following the proofs, \cref{sub:sound_complete_bug_finding} addresses the detection of bugs, which involves making precise what constitutes an ``error state.''


  \section*{\centering \small Proof of Soundness}

  Subsequently, we derive the proof of ``sound reachability,'' i.e., that all states reachable by \gls{se} are also reachable by concretely executing a program.
  This proof requires some intermediate lemmas.
  \cref{coq:comp_update} is a compositionality result on the interplay of concretization and symbolic store updates.

  \begin{lemma}[][comp_update]
    \hspace*{10pt}$V \circ (S [x := \den{e}_S]) = (V \circ S)[x := \den{e}_{V \circ S}]$
  \end{lemma}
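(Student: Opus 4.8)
The statement is an equality between two concrete environments (functions $\textrm{Vars} \to \mathbb{Z}$), so the plan is to establish it pointwise. By extensionality, it suffices to fix an arbitrary variable $y$ and show that both sides assign it the same value. I would then case-split on whether $y = x$ (the updated variable) or $y \neq x$, unfolding the definition of store update and the definition of $\circ$ (\cref{coq:comp}), namely $(V \circ S)(z) = \den{S(z)}_V$, on each side.

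The case $y \neq x$ is immediate: an update at $x$ does not affect the value at $y$, so both sides reduce to $\den{S(y)}_V = (V \circ S)(y)$. The case $y = x$ carries the actual content. After unfolding, the left-hand side becomes $\den{\den{e}_S}_V$ and the right-hand side becomes $\den{e}_{V \circ S}$, so the goal reduces to the following compositionality property of expression evaluation, which I would factor out as an auxiliary lemma:
\[
  \den{\den{e}_S}_V = \den{e}_{V \circ S}.
\]
Intuitively, symbolically evaluating $e$ in $S$ and then concretizing the resulting expression under $V$ yields the same value as concretely evaluating $e$ directly in the concretized environment $V \circ S$.

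I expect to prove this auxiliary lemma by structural induction on $e$. The constant case is trivial, since both sides reduce to the same literal, and the arithmetic and boolean operator cases follow directly from the induction hypotheses, because both symbolic and concrete evaluation are defined homomorphically over the operators (e.g.\ $\den{e_1 + e_2}_{(\cdot)} = \den{e_1}_{(\cdot)} + \den{e_2}_{(\cdot)}$), so the outer operator can be pushed past the evaluations on each side. The pivotal case is the variable case $e = z$: here $\den{z}_S = S(z)$, so the left-hand side is $\den{S(z)}_V$, which is exactly $(V \circ S)(z) = \den{z}_{V \circ S}$ by the definition of $\circ$.

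The proof is thus structurally routine, and I anticipate no deep obstacle. The one point requiring care is recognizing that the store-level update equation rests entirely on the expression-level compositionality lemma, and stating that auxiliary lemma generally enough---universally quantified over all $V$ and $S$---so that the induction hypotheses are available in the operator cases.
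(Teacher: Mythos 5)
Your proposal is correct and matches the paper's proof, which is simply ``by structural induction on $e$''---the induction you perform on the factored-out compositionality lemma $\den{\den{e}_S}_V = \den{e}_{V \circ S}$ is exactly that induction, with the extensionality and $y = x$ case split being the routine scaffolding the paper leaves implicit. No gap; your version just spells out the details.
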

  \begin{proof}
    By structural induction on $e$.
  \end{proof}

  During \gls{se}, path conditions get only ever \emph{more specific} by accumulating addition constraints.
  This is asserted in \cref{coq:sym_steps_path}.
  We note $\varphi \vDash \varphi' \triangleq \forall V, \den{\varphi}_V = \syn{true} \Rightarrow \den{\varphi'}_V = \syn{true}$.

  \begin{lemma}[][sym_steps_path]
    \hspace*{10pt}$\langle \varphi
    , S, p \rangle \rightarrow^*_{\textrm{sym}} \langle \varphi', S', p' \rangle \Rightarrow
    \ \varphi' \vDash \varphi$
  \end{lemma}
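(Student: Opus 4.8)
The plan is to reduce the statement about the reflexive–transitive closure $\to^*_{\textrm{sym}}$ to a one-step fact and then chain it using transitivity of entailment. First I would record that $\vDash$ is a preorder: it is reflexive ($\varphi \vDash \varphi$) and transitive (if $\varphi'' \vDash \varphi'$ and $\varphi' \vDash \varphi$ then $\varphi'' \vDash \varphi$). Both follow immediately by unfolding the definition $\varphi \vDash \varphi' \triangleq \forall V,\, \den{\varphi}_V = \syn{true} \Rightarrow \den{\varphi'}_V = \syn{true}$.

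Next I would prove the single-step version, $\sst{\varphi}{S}{p} \to_{\textrm{sym}} \sst{\varphi'}{S'}{p'} \Rightarrow \varphi' \vDash \varphi$, which is the heart of the argument, by induction on the derivation of the symbolic step (equivalently, rule-based case analysis on \cref{fig:symbolic}). The cases fall into three groups. For the assignment rule and the \syn{skip}-sequencing rule the path condition is copied unchanged ($\varphi' = \varphi$), so the goal is discharged by reflexivity. For the two \syn{if} rules and the two \syn{while} rules, the new path condition has the shape $\varphi \ \syn{and}\ \psi$, with $\psi$ either $\den{b}_S$ or $\syn{not}\ \den{b}_S$; here the required entailment $\varphi \ \syn{and}\ \psi \vDash \varphi$ follows from the semantics of conjunction, namely $\den{\varphi \ \syn{and}\ \psi}_V = \den{\varphi}_V \wedge \den{\psi}_V$, so any $V$ validating the conjunction also validates $\varphi$.

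The remaining case, sequential composition, is the only rule whose premise is itself a symbolic step, so this is where the induction (rather than plain case analysis) pays off: from the recursive hypothesis applied to the premise $\sst{\varphi}{S_1}{s_1} \to_{\textrm{sym}} \sst{\varphi'}{S_2}{s_2}$ we already obtain $\varphi' \vDash \varphi$, and since the rule transports these path conditions verbatim into the conclusion, the goal holds unchanged. I expect this to be the one spot demanding care: the premise of this rule is written in the excerpt with a bare ``$\to$'', so I would first confirm it denotes $\to_{\textrm{sym}}$ rather than a concrete step, which is exactly what makes the inductive hypothesis available.

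Finally I would conclude the lemma by induction on the length of the $\to^*_{\textrm{sym}}$ derivation: the reflexive base case gives $\varphi = \varphi'$ and hence $\varphi' \vDash \varphi$ by reflexivity, and the transitive step composes the single-step fact just proved with the inductive hypothesis via transitivity of $\vDash$. No step involves reasoning beyond these ingredients, so I anticipate no genuine obstacle—only the bookkeeping of orienting the entailment correctly, noting that it is the \emph{later}, more specific condition that entails the earlier one.
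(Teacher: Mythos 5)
Your proposal is correct and follows essentially the same route as the paper's proof, which likewise proceeds by induction on the length of the $\to^*_{\textrm{sym}}$ derivation combined with an inner induction on the individual $\to_{\textrm{sym}}$ step (the inner induction being needed precisely for the sequencing rule, whose premise is indeed a symbolic step despite the bare ``$\to$'' in \cref{fig:symbolic}). Your added detail---reflexivity and transitivity of $\vDash$, and the case split between rules that copy $\varphi$ and rules that strengthen it to $\varphi~\syn{and}~\psi$---is exactly the bookkeeping the paper's one-line proof leaves implicit.
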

  \begin{proof}
    By induction on the length of the $\to^*_\textrm{sym}$ derivation and by induction
    on the last $\to_\textrm{sym}$ step.
  \end{proof}

  Whenever the path condition after a symbolic step is satisfiable, we can make a corresponding concrete step.

  \begin{lemma}[][sym_step_step]
    \hspace*{10pt}$\langle \varphi
    , S, p \rangle \rightarrow_{\textrm{sym}} \langle \varphi', S', p' \rangle \wedge V \vDash \varphi' \Rightarrow \langle V \circ S, p \rangle \rightarrow \langle V \circ S', p' \rangle$
  \end{lemma}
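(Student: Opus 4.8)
The plan is to proceed by induction on the derivation of the symbolic step $\sst{\varphi}{S}{p} \to_{\textrm{sym}} \sst{\varphi'}{S'}{p'}$, which amounts to a case analysis on the seven rules of \cref{fig:symbolic}, with a genuine appeal to the induction hypothesis only in the sequencing congruence rule. In each case the goal is to fire the \emph{matching} concrete rule of \cref{fig:concrete} on the concretized state $\st{V \circ S}{p}$ and to verify that the concrete successor coincides with $\st{V \circ S'}{p'}$.

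The single piece of real technical content is an \emph{evaluation-commutation} lemma asserting that symbolic evaluation followed by concretization agrees with concretized evaluation, i.e.\ $\den{e}_{V \circ S} = \den{\den{e}_S}_V$ for every arithmetic or boolean expression $e$. I would establish this first, by structural induction on $e$: the variable case is exactly the definition of $\circ$ (\cref{coq:comp}), and the compound cases go through because $\den{\cdot}$ is defined homomorphically on the operators. This lemma is precisely what lets a satisfied \emph{symbolic} guard translate into a concrete guard with the correct truth value.

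With that lemma in hand the individual cases are routine. For assignment, $\varphi' = \varphi$ and the concrete rule yields $\st{(V \circ S)[x := \den{e}_{V \circ S}]}{\syn{skip}}$; \cref{coq:comp_update} rewrites this to the required $\st{V \circ (S[x := \den{e}_S])}{\syn{skip}}$, and the hypothesis $V \vDash \varphi'$ is not even needed. For the two \syn{if} rules and the two \syn{while} rules, the enriched path condition is $\varphi \ \syn{and} \ \den{b}_S$ (resp.\ $\varphi \ \syn{and} \ (\syn{not} \ \den{b}_S)$); from $V \vDash \varphi'$ I read off $\den{\den{b}_S}_V = \syn{true}$ (resp.\ $\syn{false}$), and the commutation lemma converts this into $\den{b}_{V \circ S} = \syn{true}$ (resp.\ $\syn{false}$), which is exactly the side condition of the corresponding concrete rule. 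The \syn{skip}-sequencing rule maps directly onto its concrete counterpart. Finally, for the sequencing congruence rule I would apply the induction hypothesis to the sub-derivation $\sst{\varphi}{S}{s_1} \to_{\textrm{sym}} \sst{\varphi'}{S'}{s_2}$ (whose path condition is again $\varphi'$, so $V \vDash \varphi'$ carries over), obtaining $\st{V \circ S}{s_1} \to \st{V \circ S'}{s_2}$, and then re-wrap with the concrete sequencing rule to conclude $\st{V \circ S}{s_1 \ \syn{;} \ s_3} \to \st{V \circ S'}{s_2 \ \syn{;} \ s_3}$.

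The main obstacle is therefore conceptual rather than laborious: isolating and proving the evaluation-commutation lemma $\den{e}_{V \circ S} = \den{\den{e}_S}_V$. Once it is available, each branching case is a one-line truth-value transfer and each structural case is a direct rule match, so the only arithmetic in the whole argument stays hidden inside \cref{coq:comp_update} and the expression induction.
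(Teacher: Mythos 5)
Your proposal is correct and follows essentially the same route as the paper: induction on the symbolic derivation, matching each symbolic rule with its concrete counterpart, and invoking the compositionality lemma (\cref{coq:comp_update}) in the assignment case. The evaluation-commutation fact $\den{e}_{V \circ S} = \den{\den{e}_S}_V$ that you isolate for the guard cases is indeed the necessary ingredient the paper's one-line proof leaves implicit (it is the same structural-induction-on-$e$ content underlying \cref{coq:comp_update}), so your write-up is simply a more explicit account of the identical argument.
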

  \begin{proof} By induction over the symbolic derivation and by composition (\ref{coq:comp_update}) in the case of variable assignments $x \ \syn{=} \ e$.
  \end{proof}

  We extend \cref{coq:sym_step_step} to the transitive reflexive closure by applying monotonicity (\cref{coq:sym_steps_path}).

  \begin{lemma}[][sym_steps_steps]
    \hspace*{10pt}$\langle \varphi
    , S, p \rangle \rightarrow^*_{\textrm{sym}} \langle \varphi', S', p' \rangle \wedge V \vDash \varphi' \Rightarrow \langle V \circ S, p \rangle \rightarrow^* \langle V \circ S', p' \rangle$
  \end{lemma}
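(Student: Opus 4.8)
The plan is to proceed by induction on the structure (equivalently, the length) of the symbolic derivation $\sst{\varphi}{S}{p} \to^*_\textrm{sym} \sst{\varphi'}{S'}{p'}$, peeling off the \emph{first} symbolic step at each stage. Peeling from the front (rather than the back) is the convenient choice here, because it keeps the final path condition $\varphi'$ fixed across the recursion, so the hypothesis $V \vDash \varphi'$ is reusable for the induction hypothesis without modification. The reflexive base case is immediate: when the derivation is empty we have $\varphi = \varphi'$, $S = S'$, and $p = p'$, so the goal $\st{V \circ S}{p} \to^* \st{V \circ S}{p}$ holds by reflexivity of $\to^*$, independently of $V \vDash \varphi'$.

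For the inductive step, I would factor the derivation as
\[
  \sst{\varphi}{S}{p} \to_\textrm{sym} \sst{\varphi_1}{S_1}{p_1} \to^*_\textrm{sym} \sst{\varphi'}{S'}{p'},
\]
with the induction hypothesis available for the suffix $\sst{\varphi_1}{S_1}{p_1} \to^*_\textrm{sym} \sst{\varphi'}{S'}{p'}$. The aim is to turn the first symbolic step into a concrete step via \cref{coq:sym_step_step}, then chain it with the concrete execution obtained from the induction hypothesis on the suffix, and finally glue the two together by transitivity of $\to^*$. Concretely, \cref{coq:sym_step_step} applied to the first step would yield $\st{V \circ S}{p} \to \st{V \circ S_1}{p_1}$, the induction hypothesis applied to the suffix (still using $V \vDash \varphi'$) would yield $\st{V \circ S_1}{p_1} \to^* \st{V \circ S'}{p'}$, and transitivity then closes the goal.

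The one genuine subtlety---and the step I expect to be the crux---is that \cref{coq:sym_step_step} requires $V$ to satisfy the \emph{post-step} path condition $\varphi_1$, whereas the hypothesis only hands me satisfaction of the \emph{final} condition $\varphi'$. This is exactly where monotonicity of path conditions (\cref{coq:sym_steps_path}) is needed: applied to the suffix derivation it gives $\varphi' \vDash \varphi_1$, and since $V \vDash \varphi'$, unfolding the definition of $\vDash$ yields $V \vDash \varphi_1$, which is precisely the side condition \cref{coq:sym_step_step} demands. Invoking \cref{coq:sym_steps_path} on the remaining suffix at each recursive level propagates satisfiability of the strongest (final) constraint back to every intermediate constraint, so that each concrete step in the resulting chain is justified. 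Everything else is routine bookkeeping about the reflexive--transitive closure, so once this path-condition mismatch is resolved by monotonicity, the proof goes through cleanly.
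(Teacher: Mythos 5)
Your proof is correct and follows exactly the paper's approach: the paper's proof is precisely ``by iterating \cref{coq:sym_step_step} and by \cref{coq:sym_steps_path},'' which is what you do---peeling off one symbolic step at a time, converting it via \cref{coq:sym_step_step}, and using the monotonicity of path conditions (\cref{coq:sym_steps_path}) to discharge the side condition that $V$ satisfies each intermediate path condition. Your identification of that path-condition mismatch as the crux, and its resolution by monotonicity, is exactly the content the paper's one-line proof leaves implicit.
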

  \begin{proof} By iterating \cref{coq:sym_step_step} and by \cref{coq:sym_steps_path}.
  \end{proof}

  Sound reachability follows from the (more general) \cref{coq:sym_steps_steps}.

  \begin{theorem}[Sound Reachability][Reach_sound]
    \hspace*{10pt}$\textit{Reach}_\textrm{sym}(p, \sigma) \Rightarrow \textit{Reach}(p, \sigma)$
  \end{theorem}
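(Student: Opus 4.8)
The plan is to reduce the statement directly to \cref{coq:sym_steps_steps}, which already does the heavy lifting of turning a symbolic derivation into a concrete one; essentially nothing beyond unfolding definitions and a bookkeeping rewrite remains. First I would unfold $\reachsym(\prog, \cstate)$ to obtain an initial store $\cstore_0$ and a reachable symbolic state $\sstate = \sst{\pathcond'}{\sstore'}{p'}$ with $\sst{\syn{true}}{id}{\prog} \to^*_\textrm{sym} \sst{\pathcond'}{\sstore'}{p'}$ and $\cstate \stateconcr{\cstore_0} \sst{\pathcond'}{\sstore'}{p'}$. Writing $\cstate = \st{\cstore}{\progb}$, unfolding the concretization relation (\cref{coq:sim}) then yields three facts: $\progb = p'$, $\cstore = \cstore_0 \circ \sstore'$, and $\den{\pathcond'}_{\cstore_0} = \syn{true}$, i.e.\ $\cstore_0 \vDash \pathcond'$.

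Second, I would instantiate \cref{coq:sym_steps_steps} with the symbolic derivation $\sst{\syn{true}}{id}{\prog} \to^*_\textrm{sym} \sst{\pathcond'}{\sstore'}{p'}$ and with the concrete store $\cstore_0$. The side condition $\cstore_0 \vDash \pathcond'$ is exactly the third fact above, so the lemma produces a concrete derivation $\st{\cstore_0 \circ id}{\prog} \to^* \st{\cstore_0 \circ \sstore'}{p'}$.

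Third, it remains to rewrite the two endpoints into the desired shape. For the source, $\cstore_0 \circ id = \cstore_0$: by \cref{coq:comp}, $(\cstore_0 \circ id)(x) = \den{id(x)}_{\cstore_0} = \den{x}_{\cstore_0} = \cstore_0(x)$, since $id$ maps every variable to its own name. For the target, $\cstore_0 \circ \sstore' = \cstore$ and $p' = \progb$ by the facts obtained in the first step, so $\st{\cstore_0 \circ \sstore'}{p'} = \st{\cstore}{\progb} = \cstate$. Hence $\st{\cstore_0}{\prog} \to^* \cstate$, and choosing $\cstore_0$ as the witness for the existential in $\reach$ establishes $\reach(\prog, \cstate)$.

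I do not expect a genuine obstacle here: all the inductive work was already discharged in proving \cref{coq:sym_steps_steps} (itself built from \cref{coq:sym_step_step} and the monotonicity result \cref{coq:sym_steps_path}). The only point needing care is the bookkeeping identity $\cstore_0 \circ id = \cstore_0$, which follows immediately from the identity-store definition and \cref{coq:comp}; in the Coq development this is a small auxiliary rewrite proved by functional extensionality.
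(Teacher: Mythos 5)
Your proposal is correct and follows essentially the same route as the paper's own proof: an application of \cref{coq:sym_steps_steps} combined with the observation that the initial concrete state concretizes the initial symbolic state (the paper packages your bookkeeping identity $\cstore_0 \circ id = \cstore_0$ and the trivial satisfaction of \syn{true} as the single fact $\st{V_0}{p} \simeq_{V_0} \sst{\syn{true}}{id}{p}$). Your write-up merely spells out the unfolding and rewriting steps that the paper leaves implicit.
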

  \begin{proof}
    By immediate application of \cref{coq:sym_steps_steps} and the fact that $\st{V_0}{p} \simeq_{V_0} \sst{\syn{true}}{id}{p}$.
  \end{proof}

  \par\noindent\rule{\linewidth}{.8pt}
  
  \section*{\centering \small Proof of Completeness}
  
  We address the converse direction:
  All concretely reachable states are symbolically reachable.
  The idea is to show that for any execution step that can be taken according to the concrete semantics, there exists a symbolic step to simulate it.

  \begin{lemma}[][step_simulation_diagram]
    \hspace*{10pt}$(\cstate \simeq_{\cstore_0} \sstate \wedge \cstate \rightarrow \cstate') \Rightarrow \exists \sstate', \sstate \rightarrow_{\textrm{sym}} \sstate' \wedge \cstate' \simeq_{\cstore_0} \sstate'$
  \end{lemma}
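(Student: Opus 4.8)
The plan is to prove this forward simulation diagram by induction on the derivation of the concrete step $\cstate \to \cstate'$ (equivalently, by inversion on the concrete rule, with an appeal to the induction hypothesis only in the sequencing case). First I would unfold the similarity hypothesis $\cstate \stateconcr{\cstore_0} \sstate$: writing $\cstate = \st{V}{p}$ and $\sstate = \sst{\pathcond}{S}{p}$, \cref{coq:sim} yields three facts, namely that the program counters coincide, that $V = \cstore_0 \storeconcr S$, and that $\den{\pathcond}_{\cstore_0} = \syn{true}$. Because similarity forces $\sstate$ to carry the \emph{same} program $p$ as $\cstate$, the case analysis on the concrete rule simultaneously determines which symbolic rule from \cref{fig:symbolic} to instantiate as the witness $\sstate'$.

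Before the case analysis, I would establish (or invoke) a substitution lemma stating that symbolic evaluation commutes with concretization, i.e.\ $\den{\den{e}_S}_{\cstore_0} = \den{e}_{\cstore_0 \storeconcr S}$ for every arithmetic and boolean expression $e$, proven by structural induction on $e$. This is the expression-level analogue of \cref{coq:comp_update}, and its variable case is exactly the definition of $\storeconcr$ (\cref{coq:comp}). Combined with $V = \cstore_0 \storeconcr S$, it gives the key identity $\den{\den{e}_S}_{\cstore_0} = \den{e}_V$, which reconciles the guards appearing in the concrete rules with the path-condition extensions appearing in the symbolic rules.

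With these in hand the cases are routine. For an assignment $x \ \syn{=} \ e$ I pick the symbolic assignment rule; the path condition is unchanged so $\den{\pathcond}_{\cstore_0} = \syn{true}$ still holds, and \cref{coq:comp_update} gives $\cstore_0 \storeconcr (S[x := \den{e}_S]) = (\cstore_0 \storeconcr S)[x := \den{e}_{\cstore_0 \storeconcr S}] = V[x := \den{e}_V]$, re-establishing the store component. For the two \syn{if} rules and the two \syn{while} rules the store is unchanged but the path condition gains a conjunct $\den{b}_S$ or $\syn{not}\ \den{b}_S$; here the substitution identity turns the concrete guard hypothesis ($\den{b}_V = \syn{true}$ or $\den{b}_V = \syn{false}$) into $\den{\den{b}_S}_{\cstore_0} = \syn{true}$ resp.\ $\den{\syn{not}\ \den{b}_S}_{\cstore_0} = \syn{true}$, so the extended path condition evaluates to \syn{true} under $\cstore_0$. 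The $\syn{skip} \ \syn{;} \ s$ case is immediate, and the sequencing rule that steps the first component is discharged by applying the induction hypothesis to the sub-derivation $\st{V_1}{s_1} \to \st{V_2}{s_2}$, wrapping the resulting symbolic step with the symbolic sequencing rule and appending $\syn{;} \ s_3$ to the program counter, all with the same $\cstore_0$.

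The main obstacle I anticipate is the branching cases, and specifically getting the expression-level substitution lemma right: everything hinges on $\den{\den{b}_S}_{\cstore_0} = \den{b}_V$, which is what lets a satisfied concrete guard justify that the freshly added path-condition conjunct holds under $\cstore_0$. Once that commutation is available, the remaining work is bookkeeping that program counters and stores line up, with the assignment case being the only other place needing a genuine lemma (\cref{coq:comp_update}). I would finally note that this single-step diagram is precisely the ingredient for lifting completeness to $\to^*$ by induction on the length of the concrete derivation, dovetailing with \cref{coq:sym_steps_steps} on the soundness side.
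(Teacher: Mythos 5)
Your proposal is correct and follows essentially the same route as the paper: induction on the derivation of the concrete step $\cstate \to \cstate'$, selecting the matching symbolic rule as the witness in each case (with the induction hypothesis used only for sequencing). The extra machinery you make explicit---the expression-level commutation $\den{\den{e}_S}_{\cstore_0} = \den{e}_{\cstore_0 \storeconcr S}$ and the use of \cref{coq:comp_update} in the assignment case---is exactly the detail the paper's one-line proof leaves implicit, so there is no divergence in substance.
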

  \begin{proof}
    By induction on the derivation $\cstate \rightarrow \cstate'$ and selecting the appropriate symbolic rule for each case.
  \end{proof}

  We extend \cref{coq:step_simulation_diagram} to the reflexive transitive closure.

  \begin{lemma}[][steps_simulation_diagram]
    \hspace*{10pt}$(\cstate \simeq_{\cstore_0} \sstate \wedge \cstate \rightarrow^* \cstate') \Rightarrow \exists \sstate', \sstate \rightarrow^*_{\textrm{sym}} \sstate' \wedge \cstate' \simeq_{\cstore_0} \sstate'$
  \end{lemma}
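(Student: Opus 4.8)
The plan is to lift the single-step simulation of \cref{coq:step_simulation_diagram} to the reflexive transitive closure by induction on the derivation of $\cstate \rightarrow^* \cstate'$. The concretizing store $\cstore_0$ stays fixed throughout, so the invariant I maintain is exactly the simulation relation $\cstate'' \stateconcr{\cstore_0} \sstate''$ between the current concrete and symbolic states.

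In the base case, $\cstate \rightarrow^* \cstate'$ is the empty derivation, so $\cstate' = \cstate$. I choose $\sstate' := \sstate$; reflexivity of $\rightarrow^*_{\textrm{sym}}$ discharges the first conjunct, and the hypothesis $\cstate \stateconcr{\cstore_0} \sstate$ discharges the second. In the inductive step, the derivation decomposes as $\cstate \rightarrow \cstate'' \rightarrow^* \cstate'$. First I apply \cref{coq:step_simulation_diagram} to $\cstate \stateconcr{\cstore_0} \sstate$ and the leading step $\cstate \rightarrow \cstate''$, obtaining an intermediate symbolic state $\sstate''$ with $\sstate \rightarrow_{\textrm{sym}} \sstate''$ and $\cstate'' \stateconcr{\cstore_0} \sstate''$. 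The induction hypothesis, instantiated with $\cstate'' \stateconcr{\cstore_0} \sstate''$ and the remaining derivation $\cstate'' \rightarrow^* \cstate'$, then yields $\sstate'$ with $\sstate'' \rightarrow^*_{\textrm{sym}} \sstate'$ and $\cstate' \stateconcr{\cstore_0} \sstate'$. Composing $\sstate \rightarrow_{\textrm{sym}} \sstate''$ with $\sstate'' \rightarrow^*_{\textrm{sym}} \sstate'$ gives $\sstate \rightarrow^*_{\textrm{sym}} \sstate'$, and $\cstate' \stateconcr{\cstore_0} \sstate'$ is precisely the second conjunct, so this $\sstate'$ witnesses the goal.

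Little is genuinely difficult here, since the heavy lifting---matching a concrete rule with a symbolic one while preserving the concretization---was already carried out in \cref{coq:step_simulation_diagram}. The only point demanding care is the orientation of the induction. The argument above assumes $\rightarrow^*$ is unfolded from the \emph{front} (a first step followed by a tail), which lets the induction hypothesis apply to the shorter tail derivation. Were $\rightarrow^*$ instead defined by appending steps at the \emph{back}, I would apply the hypothesis to the prefix and then invoke \cref{coq:step_simulation_diagram} on the final step; the concluding composition would use transitivity of $\rightarrow^*_{\textrm{sym}}$ in the same manner. Either way, it is the fixed $\cstore_0$ that makes the simulation invariant compose cleanly across successive steps.
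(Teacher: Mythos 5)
Your proof is correct and matches the paper's approach: the paper's proof is simply ``by iterating \cref{coq:step_simulation_diagram},'' which is exactly your induction on the $\rightarrow^*$ derivation, composing the single-step simulation diagram with the induction hypothesis while keeping $\cstore_0$ fixed. Your remark about the orientation of the induction is a sound implementation detail but does not constitute a different route.
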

  \begin{proof}
    By iterating \cref{coq:step_simulation_diagram}.
  \end{proof}

  The completeness of $\textit{Reach}_\textrm{sym}$ follows directly from \cref{coq:steps_simulation_diagram}.


  \begin{theorem}[Complete Reachability][Reach_complete]
    \hspace*{10pt}$\textit{Reach}(p, \sigma) \Rightarrow \textit{Reach}_\textrm{sym}(p, \sigma)$
  \end{theorem}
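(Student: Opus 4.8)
The plan is to instantiate the already-established simulation diagram (\cref{coq:steps_simulation_diagram}) at the distinguished initial symbolic state, mirroring exactly the strategy used for \cref{coq:Reach_sound} on the soundness side. Unfolding the definition of $\reach(\prog, \cstate)$, the hypothesis supplies an initial environment $\cstore_0$ together with a concrete derivation $\st{\cstore_0}{\prog} \to^* \cstate$. To feed this into \cref{coq:steps_simulation_diagram} I first need a symbolic state that the initial concrete state relates to; the natural candidate is the default initial symbolic state $\sst{\syn{true}}{id}{\prog}$ from which symbolic execution is defined to start.

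First I would establish the base relation $\st{\cstore_0}{\prog} \stateconcr{\cstore_0} \sst{\syn{true}}{id}{\prog}$. By the definition of $\stateconcr{\cstore_0}$, this decomposes into three obligations: the program counters agree (immediate, both are $\prog$); the path condition is satisfied, i.e.\ $\den{\syn{true}}_{\cstore_0} = \syn{true}$ (immediate); and the environment is the concretization of the identity store, i.e.\ $\cstore_0 = \cstore_0 \circ id$. The last holds pointwise, since $(\cstore_0 \circ id)(x) = \den{id(x)}_{\cstore_0} = \den{x}_{\cstore_0} = \cstore_0(x)$. This is precisely the fact already invoked when proving \cref{coq:Reach_sound}.

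With the base relation in hand, applying \cref{coq:steps_simulation_diagram} to $\st{\cstore_0}{\prog} \stateconcr{\cstore_0} \sst{\syn{true}}{id}{\prog}$ and the derivation $\st{\cstore_0}{\prog} \to^* \cstate$ yields a symbolic state $\sstate'$ with $\sst{\syn{true}}{id}{\prog} \to^*_\textrm{sym} \sstate'$ and $\cstate \stateconcr{\cstore_0} \sstate'$. Choosing the existential witnesses $\cstore_0$ and $\sstate'$, these two conjuncts are exactly the body of $\reachsym(\prog, \cstate)$, which closes the goal.

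I do not expect a genuine obstacle here: the heavy lifting---turning a single concrete step into a simulating symbolic step and iterating it to the transitive closure---was already discharged in \cref{coq:step_simulation_diagram,coq:steps_simulation_diagram}. The only subtlety is selecting the correct initial symbolic state and confirming the identity $\cstore_0 = \cstore_0 \circ id$, the same bookkeeping step that appeared on the soundness side; everything else is a matter of unfolding definitions and supplying the right witnesses.
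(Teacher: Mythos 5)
Your proposal is correct and follows exactly the paper's argument: establish the base relation $\st{\cstore_0}{\prog} \simeq_{\cstore_0} \sst{\syn{true}}{id}{\prog}$ and then apply \cref{coq:steps_simulation_diagram} to the concrete derivation, supplying $\cstore_0$ and the resulting symbolic state as witnesses. Your write-up merely spells out the pointwise check $\cstore_0 = \cstore_0 \circ id$ that the paper dismisses as ``easy to see.''
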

  \begin{proof}
    It is easy to see that $\st{V_0}{p} \simeq_{V_0} \sst{\syn{true}}{id}{p}$. The result immediately follows from this observation and \cref{coq:steps_simulation_diagram}.
  \end{proof}

  \subsection{Sound \& Complete Bug Finding}
  \label{sub:sound_complete_bug_finding}

  In the previous section, we proved that the states reachable by \gls{se} are the same as those reachable by concrete execution.
  Building on that, we establish that \gls{se} can also be used as a sound and complete bug-finding method.
  First, however, we must address the question \emph{What does it mean to find bugs?}
  We present different views on this question and prove that \gls{se} answers each case.

  \paragraph*{Detecting Buggy Programs}

  A natural view on bug finding is to ask---and answer---the question \emph{Does my program contain a bug?}
  We introduce a relation $\textit{HasBug}(\prog)$ characterizing buggy programs.
  This predicate is central to our definition of T-soundness and T-completeness (\cref{sub:tsound_tcomplete_bug_finder}).
  A \emph{state} \(\sigma=\st{\cstore}{\prog}\) ``has a bug'' if $\prog$ is \emph{stuck}, i.e., has no semantic successor state.
  However (in the case of IMP), this would also hold if \(\prog\) was \(\syn{skip}\), although we were more thinking of \(\syn{fail}\).
  We explicitly exclude the \syn{skip} case in the following definitions of ``not stuck'' (i.e., ``can progress'') and ``stuck.''

  \setCoqFilename{WiSE.lang.imp}
  \begin{definition}[Progress][progress]
    \hspace*{10pt}$\textit{Progress}(\st{V}{p}) \triangleq p = \syn{skip} \vee \exists \sigma, \st{V}{p} \to \sigma$
  \end{definition}

  \begin{definition}[Stuck States][Stuck]
    \hspace*{10pt}$\textit{Stuck}(\sigma) \triangleq \neg \textit{Progress}(\sigma)$
  \end{definition}

  A buggy program has a reachable, stuck state.

  \begin{definition}[Having a Bug][HasBug]
    \hspace*{10pt}$\textrm{HasBug}(p) \triangleq \exists \sigma, \textit{Reach}(p, \sigma) \wedge \textit{Stuck}(\sigma)$
  \end{definition}

  As the question \emph{Does my program contain a bug?} can be reduced to reachability, we can answer it symbolically.
  
  \setCoqFilename{WiSE.symbolic.symex}
  \begin{definition}[][HasBug]
    \hspace*{10pt}$\textrm{HasBug}_\textrm{sym}(p) \triangleq \exists \sigma, \textit{Reach}_\textrm{sym}(p, \sigma) \wedge \textit{Stuck}(\sigma)$
  \end{definition}

  The equivalence of \(\hasBugSym\) and \(\hasBug\) follows immediately from the equivalence of \(\reachsym\) and \(\reach\).

  \begin{theorem}[Sound and Complete Bug Detection (1)][HasBug_correct]
    \hspace*{10pt}$\textit{HasBug}_\textrm{sym}(p) \Leftrightarrow \textit{HasBug}(p)$
  \end{theorem}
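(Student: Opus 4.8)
The plan is to unfold both predicates and notice that they are literally the same existential statement, save that $\hasBugSym$ carries a $\reachsym$ conjunct exactly where $\hasBug$ carries a $\reach$ conjunct. Crucially, the existential witness $\sigma$ and the accompanying $\textit{Stuck}(\sigma)$ conjunct are identical in both formulas. Thus the entire argument reduces to rewriting the reachability predicate underneath the existential binder, using the equivalence $\reachsym(\prog,\cstate) \Leftrightarrow \reach(\prog,\cstate)$ already established as Theorem~2.1 (equivalently, the conjunction of \cref{coq:Reach_sound} and \cref{coq:Reach_complete}).

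For the forward direction ($\hasBugSym(\prog) \Rightarrow \hasBug(\prog)$), I would introduce the witness $\sigma$ together with its proofs of $\reachsym(\prog,\sigma)$ and $\textit{Stuck}(\sigma)$. Applying sound reachability (\cref{coq:Reach_sound}) to the former yields $\reach(\prog,\sigma)$, and I repackage the \emph{same} $\sigma$ with this new reachability fact and the untouched $\textit{Stuck}(\sigma)$ to discharge $\hasBug(\prog)$. The backward direction is perfectly symmetric: from a witness $\sigma$ satisfying $\reach(\prog,\sigma) \wedge \textit{Stuck}(\sigma)$, complete reachability (\cref{coq:Reach_complete}) supplies $\reachsym(\prog,\sigma)$, and the identical repackaging yields $\hasBugSym(\prog)$.

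There is essentially no obstacle here: the statement is a direct corollary of the reachability equivalence. The only point worth flagging is that because both reachability predicates range over the \emph{same} concrete state $\sigma$, the stuckness hypothesis transfers verbatim, with no re-derivation of $\textit{Stuck}$ required. All the genuine work was already carried out in establishing Theorem~2.1 via the simulation lemmas \cref{coq:sym_steps_steps} and \cref{coq:steps_simulation_diagram}.
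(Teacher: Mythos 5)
Your proof is correct and matches the paper's argument exactly: the paper likewise dispatches the ``$\Rightarrow$'' direction by soundness of $\textit{Reach}_\textrm{sym}$ (\cref{coq:Reach_sound}) and the ``$\Leftarrow$'' direction by completeness (\cref{coq:Reach_complete}), transferring the same witness $\sigma$ and its $\textit{Stuck}(\sigma)$ proof unchanged. Nothing further is needed.
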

  \begin{proof}
    By soundness of $\textit{Reach}_\textrm{sym}$ in the ``$\Rightarrow$'' direction and by completeness of $\textit{Reach}_\textrm{sym}$ in the ``$\Leftarrow$'' direction.
  \end{proof} 

  \paragraph*{Finding Erroneous Execution States}

  $\textit{HasBug}(p)$ asserts that \emph{some} bug exists.
  Alternatively, we can specifically ask whether a \emph{given} state is erroneous and whether \gls{se} can find it. 
  The predicate $\textit{IsBug}(p, \sigma)$ asserts that $\sigma$ is an error state of $p$.
  In \cref{sub:rel_sound_compl_bug_finder}, we derive a refined ``relative'' version of T-soundness and T-completeness for nonterminating symbolic executors using $\textit{IsBug}(p, \sigma)$.

  \setCoqFilename{WiSE.lang.imp}
  \begin{definition}[][IsBug]
    \hspace*{10pt}$\textit{IsBug}(p, \sigma) \triangleq \textit{Reach}(p, \sigma) \wedge \textit{Stuck}(\sigma)$
  \end{definition}

  Similarly to above, we derive a \emph{symbolic} predicate $\textit{IsBug}_\textrm{sym}(p, \sigma)$.
  
  \setCoqFilename{WiSE.symbolic.symex}
  \begin{theorem}[Sound and Complete Bug Detection (2)][IsBug_correct]
    \hspace*{10pt}$\textit{IsBug}_\textrm{sym}(p, \sigma) \Leftrightarrow \textit{IsBug}(p, \sigma)$
  \end{theorem}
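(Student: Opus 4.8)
The plan is to unfold $\textit{IsBug}_\textrm{sym}$ to its natural definition, which---mirroring $\textit{IsBug}$ and the $\hasBugSym$/$\hasBug$ pair above---should read $\textit{IsBug}_\textrm{sym}(p,\sigma) \triangleq \reachsym(p,\sigma) \wedge \textit{Stuck}(\sigma)$. The crucial observation is that the second conjunct $\textit{Stuck}(\sigma)$ is \emph{literally the same proposition} on both sides of the equivalence: it is a predicate on the \emph{concrete} state $\sigma$, not on any symbolic state. Consequently the two predicates differ \emph{only} in their first conjunct, $\reachsym(p,\sigma)$ versus $\reach(p,\sigma)$, and the whole biconditional collapses onto the reachability equivalence of Theorem~2.1.

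Concretely, I would prove the two directions separately and carry the unchanged $\textit{Stuck}(\sigma)$ along each time. For ``$\Rightarrow$'', assume $\reachsym(p,\sigma) \wedge \textit{Stuck}(\sigma)$; apply Sound Reachability (\cref{coq:Reach_sound}) to the first conjunct to obtain $\reach(p,\sigma)$, keep $\textit{Stuck}(\sigma)$ verbatim, and recombine into $\textit{IsBug}(p,\sigma)$. For ``$\Leftarrow$'', assume $\reach(p,\sigma) \wedge \textit{Stuck}(\sigma)$ and symmetrically apply Complete Reachability (\cref{coq:Reach_complete}) to the first conjunct. This is exactly the argument used for \cref{coq:HasBug_correct}, specialized from ``some reachable stuck state exists'' to ``this particular $\sigma$ is reachable and stuck,'' so no existential over states even needs to be introduced or eliminated here.

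I do not anticipate a genuine obstacle: there is no mathematical content beyond the already-established reachability equivalence, and the proof should be a one-liner in the same style as the preceding theorem. The only point worth flagging---and the reason the argument is this short---is that $\textit{Stuck}$ is never transported across the concrete/symbolic boundary. Had stuckness instead been phrased on \emph{symbolic} states, one would additionally need a lemma relating symbolic stuckness to concrete stuckness through the concretization relation $\stateconcr{\cstore_0}$; the chosen formulation, which keeps $\sigma$ concrete in both $\textit{IsBug}$ and $\textit{IsBug}_\textrm{sym}$, sidesteps this entirely.
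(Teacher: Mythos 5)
Your proposal is correct and matches the paper's treatment exactly: the paper defines $\textit{IsBug}_\textrm{sym}$ ``similarly to above'' (i.e., as $\textit{Reach}_\textrm{sym}(p,\sigma) \wedge \textit{Stuck}(\sigma)$) and discharges the equivalence by the same appeal to sound reachability (\cref{coq:Reach_sound}) in one direction and complete reachability (\cref{coq:Reach_complete}) in the other, just as in \cref{coq:HasBug_correct}. Your observation that $\textit{Stuck}(\sigma)$ stays on the concrete side and never needs to be transported across the concretization relation is precisely why the paper can state this theorem without further argument.
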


  \paragraph*{Finding Erroneous Inputs}

  Our final view is to ask for ``bad,'' i.e., bug-triggering, \emph{inputs}.
  The predicate $\textit{BadInput}(p, V_0)$ characterizes those (where the environment \(\cstore_0\) represents an input):

  \setCoqFilename{WiSE.lang.imp}
  \begin{definition}[][BadInput]
    \hspace*{10pt}$\textit{BadInput}(p, V_0) \triangleq \exists \sigma,\st{V_0}{p} \to^* \sigma \wedge \textit{Stuck}(\sigma)$
  \end{definition}

  The symbolic equivalent is more subtle since $\to^*_\textrm{sym}$ ranges over \emph{symbolic} states. We must concretize the reached symbolic state before checking whether it denotes a bug.

  \setCoqFilename{WiSE.symbolic.symex}
  \begin{definition}[][BadInput]
    \hspace*{10pt}$\textit{BadInput}_\textrm{sym}(p, V_0) \triangleq \exists \sigma \exists \hat{\sigma},\\\hspace*{20pt}\sst{\syn{true}}{S}{p} \to^*_\textrm{sym} \hat{\sigma} \wedge \sigma \simeq_{V_0} \hat{\sigma} \wedge \textit{Stuck}(\sigma)$
  \end{definition}

  The equivalence of $\textit{BadInput}_\textrm{sym}$ and $\textit{BadInput}$ follows from the definition of $\textit{Reach}_\textrm{sym}$ and the proof of \cref{coq:Reach_sound} and \cref{coq:Reach_complete}.

  \begin{theorem}[Sound and Complete Bug Detection (3)][BadInput_correct]
    \hspace*{10pt}$\textit{BadInputs}_\textrm{sym}(p, V_0) \Leftrightarrow \textit{BadInputs}(p, V_0)$
  \end{theorem}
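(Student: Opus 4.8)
The plan is to prove the two directions separately, and in each case to reuse the simulation machinery already established for reachability rather than redo any induction. The observation that distinguishes $\textit{BadInput}$ from plain reachability-plus-stuckness is that both the concrete execution and the concretization are pinned to the \emph{same} fixed input $\cstore_0 = V_0$: the concrete run starts in $\st{V_0}{\prog}$, and the witnessing symbolic state must concretize to the stuck state \emph{via $V_0$}. This is exactly the data carried by the relation $\stateconcr{V_0}$, so the proof amounts to specializing the existential initial store hidden inside $\reach$/$\reachsym$ to $V_0$.

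For the forward direction $\textit{BadInput}(\prog, V_0) \Rightarrow \textit{BadInput}_\textrm{sym}(\prog, V_0)$, I unfold the hypothesis to obtain a stuck $\cstate$ with $\st{V_0}{\prog} \to^* \cstate$. I first establish the base fact $\st{V_0}{\prog} \stateconcr{V_0} \sst{\syn{true}}{id}{\prog}$, which holds because the program counters coincide, $\den{\syn{true}}_{V_0} = \syn{true}$, and $V_0 = V_0 \circ id$ (as $(V_0 \circ id)(x) = \den{id(x)}_{V_0} = V_0(x)$). Feeding this base fact together with the concrete run into the simulation diagram \cref{coq:steps_simulation_diagram} yields a symbolic state $\sstate$ with $\sst{\syn{true}}{id}{\prog} \to^*_\textrm{sym} \sstate$ and $\cstate \stateconcr{V_0} \sstate$; combined with the preserved $\textit{Stuck}(\cstate)$, this is precisely the witness demanded by $\textit{BadInput}_\textrm{sym}$.

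For the backward direction $\textit{BadInput}_\textrm{sym}(\prog, V_0) \Rightarrow \textit{BadInput}(\prog, V_0)$, I unfold to obtain $\cstate$ and $\sstate = \sst{\pathcond'}{S'}{p'}$ with $\sst{\syn{true}}{id}{\prog} \to^*_\textrm{sym} \sstate$, $\cstate \stateconcr{V_0} \sstate$, and $\textit{Stuck}(\cstate)$. Reading off the three conjuncts of $\cstate \stateconcr{V_0} \sstate$ gives $\cstate = \st{V_0 \circ S'}{p'}$ and the side condition $V_0 \vDash \pathcond'$. Since the reached path condition is satisfied by $V_0$, \cref{coq:sym_steps_steps} applies and produces $\st{V_0 \circ id}{\prog} \to^* \st{V_0 \circ S'}{p'}$. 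Rewriting $V_0 \circ id = V_0$ on the left and recognizing $\st{V_0 \circ S'}{p'} = \cstate$ on the right turns this into $\st{V_0}{\prog} \to^* \cstate$, which with $\textit{Stuck}(\cstate)$ discharges $\textit{BadInput}(\prog, V_0)$.

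The only mildly delicate point — and the step I would double-check in Coq — is the treatment of $V_0 \circ id$: one must confirm that concretizing through the identity store returns the original input unchanged, so that the reconstructed concrete trace genuinely begins in $\st{V_0}{\prog}$ and not merely in some extensionally-equal store. Everything else is bookkeeping, because the satisfiability hypothesis $V \vDash \pathcond'$ required by \cref{coq:sym_steps_steps} is furnished for free by the third conjunct of $\stateconcr{V_0}$, and the preservation of $\textit{Stuck}$ is immediate since stuckness is a property of $\cstate$ alone. No fresh induction is needed; the theorem falls out as a corollary of \cref{coq:steps_simulation_diagram} and \cref{coq:sym_steps_steps} with the existential store instantiated to $V_0$.
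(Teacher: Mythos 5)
Your proposal is correct and follows essentially the same route as the paper: the paper's proof of \cref{coq:BadInput_correct} likewise reuses the reachability machinery---\cref{coq:sym_steps_steps} for one direction and \cref{coq:steps_simulation_diagram} for the other, together with the base fact $\st{V_0}{p} \simeq_{V_0} \sst{\syn{true}}{id}{p}$---with the initial store pinned to $V_0$ rather than existentially quantified. Your observation that one cannot invoke \cref{coq:Reach_sound} and \cref{coq:Reach_complete} as black boxes (because of their hidden existential over the initial store) but must instead replay their proofs at the fixed input $V_0$ is exactly what the paper means when it says the result follows from ``the proof of'' those theorems.
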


  This view bridges the gap between \gls{se} and fuzzing:
  What is needed to report a \emph{bug-triggering input} is (1) a bug \emph{oracle} (a symbolic equivalent of \(\mathit{Stuck}(\cstate)\)) and (2) a constraint solver extracting a solution \(\cstore_0\) from a path constraint.
  Subsequently, we describe how to build a testing tool based on our insights on \gls{se}, including a brief discussion of ``crash oracles.''
  Constraint solving is not in our scope.

  \section{A Trustworthy Symbolic Bug Finder}%
  \label{sec:trustworthy_bug_finder}

  \Cref{sec:finding_bugs} demonstrated the feasibility of finding bugs with \gls{se} based on the symbolic operational semantics.
  Now, we turn these insights into an executable correct symbolic bug finder.
  First, we discuss how to detect errors, resolve the nondeterminism of the symbolic semantics, and deal with diverging executions (\cref{sub:semantics_to_executor}).
  Second, we provide relevant implementation details of our \wise{} prototype (\cref{sec:implem}).
  Third, we introduce ``relative'' notions of soundness and completeness and relate them to the symbolic executor (\cref{sub:soundn-compln-se-engine}) and bug finder (\cref{sub:rel_sound_compl_bug_finder}).
  Finally, we bridge the (already tiny) gap toward a T-sound and T-complete bug finder (\cref{sub:tsound_tcomplete_bug_finder}).
  
  \subsection{From the Symbolic Semantics to an Executable Symbolic Executor}%
  \label{sub:semantics_to_executor}

  Our symbolic operational semantics is a solid basis for symbolic bug finding.
  Yet, its design leaves some crucial questions open that need to be addressed when developing an executable \gls{se}-based testing tool.
  These are:
  (1) The symbolic semantics is \emph{nondeterministic}. 
  Whenever an \syn{if} or \syn{while} statement is reached, \gls{se} can \textit{choose} which branch to execute next.
  (2) We need an effective way to detect error states.
  The previously regarded predicates \(\textit{HasBug}_\textrm{sym}\), $\textit{IsBug}_\textrm{sym}$, and $\textit{BadInput}_\textrm{sym}$ are adequate formalizations for bug detection but are not executable because they live on the \emph{semantic} level. 
  Finally, (3) the symbolic semantics does \emph{not terminate} for programs with loops, which we cannot solve in general, but must face nonetheless.

  
  \begin{figure}[htb]
    \centering
    \begin{gather*}
      \textrm{HasBug}_\textrm{sym}(p)\\
      \Leftrightarrow\\
      \exists \sigma, \textit{Reach}_\textrm{sym}(p, \sigma) \wedge \textit{Stuck}
      (\sigma)\\
      \Leftrightarrow\\
      \exists \sigma, (\exists V_0, \exists \hat{\sigma}, \sst{\syn{true}}{id}{p} \to^*_\textrm{sym} \hat{\sigma} \wedge \sigma \simeq_{V_0} \hat{\sigma}) \wedge \textit{Stuck}(\sigma)\\
      \Leftrightarrow\\
      \exists V_0, \exists \sst{\varphi}{S}{q},\\\sst{\syn{true}}{id}{p} \to^*_\textrm{sym} \sst{\varphi}{S}{q} \wedge \den{\varphi}_{V_0} = \syn{true} \wedge \textit{Stuck}(\st{V_0 \circ S}{q})\\
      \Leftrightarrow\\
      \exists \sst{\varphi}{S}{q}, \fbox{$\sst{\syn{true}}{id}{p} \to^*_\textrm{sym} \sst{\varphi}{S}{q}$}_{(1)} \wedge
      \\\fbox{$\exists V_0, \den{\varphi}_{V_0} = \syn{true}$}_{(2)} \wedge \fbox{$\textit{Stuck}(\st{V_0 \circ S}{q})$}_{(3)}
    \end{gather*}
    \Description{Decomposition of symbolic bug finding into checks for (1) reachability, (2) satisfiability, and (3) ``stuckness.''}%
    \caption{Decomposition of symbolic bug finding into checks for (1) reachability, (2) satisfiability, and (3) ``stuckness.''}
    \label{fig:hiddensat}
  \end{figure}

  \paragraph*{Resolving Nondeterminism}

  While in theory, we could spawn independent processes whenever \gls{se} performs a case distinction when facing an \syn{if} or \syn{while} statement, this cannot continue forever.
  The number of branches in a symbolic execution tree is exponential in the number of such branching points; eventually, we will \emph{have to} decide which branches to follow first.
  This process is known as \emph{path selection} in \gls{se}~\cite{baldoni.coppa.ea-18}.
  Usually, path selection is regarded as a ``heuristics,'' merely an artifact of efficiency considerations.
  However, the exploration order has solid consequences on the \emph{T-completeness} of the resulting bug finder.
  Early dynamic symbolic executors such as DART~\cite{godefroid.klarlund.ea-05} explored programs using a Depth-First Search (DFS)\@.
  This approach is \emph{incomplete} in general:
  Such executors likely get lost in ever deeper iterations of loops at the beginning of a program, neglecting any states that follow.
  Instead, we choose a Breadth-First Search (BFS) approach.
  In \cref{sub:soundn-compln-se-engine}, we prove the completeness of BFS-based path selection, which requires estimating where a successor state is added in the stream of states to explore.
  
  \paragraph*{Detecting Errors}

  Any bug finder depends on a \emph{bug oracle} to determine when a program state constitutes a bug. 
  Predicates such as $\textit{HasBug}_\textrm{sym}$ are non-executable \emph{semantic} notions.
  Moreover, they are defined using symbolic reachability, which requires checking if a path condition is \emph{satisfiable} (see \cref{fig:hiddensat}).
  Fuzzers commonly classify program runs as erroneous if they end with a \emph{crash} of the program.
  We use a similar approach:
  We classify a program as buggy or crashing if it is \emph{stuck.}
  In our case, it can be shown that a state is stuck if and only the next statement to be executed is \syn{fail}.
  This simple solution is already quite powerful.
  For example, to extend IMP with failing \emph{expressions,} one can use \syn{fail} to model the exceptional cases.
  Consider integer division:
  For any expression \code{x / y}, we add a preceding statement \code{\syn{if} y == 0 \syn{then fail else} d = div(x,y)}, replacing \code{x / y} with the fresh variable \code{d}.
  The function \code{div} would implement division using repeated subtraction, \emph{assuming} that \code{y} is nonzero.

  \paragraph*{Addressing Termination}

  As a method simulating concrete executions, \gls{se} does not terminate for most looping or recursive programs.
  Derivations in our symbolic semantics (\cref{fig:symbolic}) are infinite for \emph{any} program with a loop.
  In \cref{sub:tsound_tcomplete_bug_finder}, we integrate pruning of unsatisfiable states, which solves that issue for loops with concrete or unsatisfiable guards.
  Even so, loops with underconstrained symbolic guards induce an infinite amount of reachable successor states.

  There are two main options for dealing with nontermination in \gls{se}.
  The first option is to impose an upper bound on the number of times a loop is unwound.
  This approach has the advantage that it is easy to implement.
  However, it prevents us from reasoning about the \emph{T-completeness} of a symbolic executor, since the chosen bound might be too small to detect sufficiently deep bugs.
  The approach we chose instead is to implement the symbolic executor as a generator of an \emph{infinite stream} of states.
  A user of a stream-based executor can only ever inspect finitely many states from the stream, which corresponds to bounded execution;
  yet, we \emph{can obtain a completeness result.}
  Our OCaml and Python implementations (\cref{sec:case_studies}) retrieve elements from the stream up to a user-specified bound.


  Thus, we view automated bug finders as producers of lazily evaluated infinite sequences of status messages reporting on the progress of the bug finding process.
  There are various ways of implementing infinite streams depending on the capabilities of the programming language a symbolic executor is implemented in.
  In proof systems based on purely functional languages such as Coq, infinite sequences can be modeled by \emph{co-induction}~\cite{10.1007/3-540-61780-9_67,Gimnez2005ATO}.
  In languages with lazy evaluation such as Haskell, such infinite sequences can be implemented using standard \emph{lists}.
  In the case of eagerly evaluated functional languages such as OCaml, the implementation can resort to \emph{explicit suspensions}.
  Otherwise, we can model streams with \emph{generators} \textit{à la} Python.

  In \cref{sub:soundn-compln-se-engine}, we show that the stream-based approach permits elegant specifications of correctness based on temporal logic.

  \subsection{Functional Implementation}
  \label{sec:implem}

  After many dry discussions, we provide some implementation details of our proven-correct symbolic executor.
  As discussed in \cref{sub:semantics_to_executor}, the executor uses a breadth-first goal selection and produces a ``lazy'' infinite stream of status messages.
  We use an OCaml-flavored syntax in all code snippets; we found it easy to translate this syntax to, e.g., Python.
  Everything discussed in this section is implemented in Coq, proven correct, and automatically extracted to an executable OCaml executable version.

  \paragraph*{Symbolic Evaluator and Symbolic Stores}
  
  Implementing a symbolic executor requires an efficient implementation of symbolic stores and a function to evaluate symbolic expressions.
  One way to implement symbolic stores is using hash tables with variable names as keys and expressions as values.
  Updating a store \(\sstore\) with an assignment \(x\coloneqq{}e\), written $\sstore[x \coloneqq e]$, returns an updated copy of the hash table.
  The symbolic evaluation \(\den{e}_\sstore\) of an expression $e$ in a store $S$ can be performed by a simple recursive function.
  We abstract from these implementation details and use the mathematical notations $S[x := e]$ and $\den{e}_S$ in code snippets to denote assignments and symbolic evaluations.

  \paragraph*{A Lazy Symbolic Executor}
  
  The main component of the symbolic executor is a function \code{expand} (\cref{fig:expand}) computing the successors of a symbolic state, implementing the symbolic semantics in \cref{fig:symbolic}.
  The executor's entry point is a function \texttt{run} (\cref{fig:run}) taking a FIFO queue of symbolic states and returning a stream of (optional) reachable symbolic states.
  At every iteration, it yields the first state from the queue.
  Then, the direct successors of the current state are computed using \code{expand} and enqueued in the task queue (with the lowest priority, realizing a breadth-first search).
  Finally, \texttt{run} calls itself recursively on the updated queue.
  If the queue is empty, a \texttt{None} token is yielded indicating that no states are left for expansion.

  \begin{figure}[tb]
    \centering
    \begin{minted}[escapeinside=//,mathescape=true,frame=single, framesep=5pt]{ocaml}
let run /$l$/ =
  match /$l$/ with
  | [] ->
    yield None
    run []
  | /$\sst{\varphi}{S}{p}$/::/$l$/ ->
    yield /$\sst{\varphi}{S}{p}$/
    run (/$l + \texttt{expand} \ \sst{\varphi}{S}{p}$/)
    \end{minted}
    \Description{Main Entry Point of the Symbolic Executor}%
    \caption{Main Entry Point of the Symbolic Executor}%
    \label{fig:run}
  \end{figure}

  \begin{figure}[tb]
    \centering
    \begin{minted}[escapeinside=//,mathescape=true, frame=single, framesep=5pt]{ocaml}
let expand /$\sst{\varphi}{S}{p}$/ =
  match /$p$/ with
  | /$\syn{skip}$/ | /$\syn{fail}$/ -> [ ]
  | /$\syn{skip}$/ /$\syn{;}$/ c ->
    [ /$\sst{\varphi}{S}{c}$/ ]
  | x /$\syn{=}$/ e ->
    [ /$\sst{\varphi}{S[\texttt{x} := \den{e}_S]}{\syn{skip}}$/ ]
  | /$c_1$/ /$\syn{;}$/ /$c_2$/ ->
    [ /$\sst{\varphi'}{S'}{c_1 \ \syn{;} \ c_2}$/ for /$\sst{\varphi'}{S'}{c_1}$/ in expand /$c_1$/ ]
  | /$\syn{while}$/ cond /$\syn{do}$/ c /$\syn{od}$/ ->
    [ /$\sst{\varphi \ \syn{and} \ \texttt{cond}}{S}{c \ \syn{;} \ p}$/, /$\sst{\varphi \ \syn{and} \ (\syn{and} \ \texttt{cond})}{S}{\syn{skip}}$/ ]
  | /$\syn{if}$/ /$cond$/ /$\syn{then}$/ /$c_1$/ /$\syn{else}$/ /$c_2$/ /$\syn{fi}$/ ->
    [ /$\sst{\varphi \ \syn{and} \ cond}{S}{c_1}$/, /$\sst{\varphi \ \syn{and} \ (\syn{not} \ cond)}{S}{c_2}$/ ]
    \end{minted}
    \Description{Computing Successors of Symbolic States}%
    \caption{Computing Successors of Symbolic States}%
    \label{fig:expand}
  \end{figure}

  \paragraph*{Reporting Bugs}
  
  The function \texttt{run} implements a symbolic executor.
  We obtain a bug finder by turning symbolic states from the stream into adequate \emph{status messages:}
  \begin{enumerate}
    \item $\texttt{BugFound}(\sst{\varphi}{V}{p})$ informs the user that a bug has been detected and returns the associated symbolic state.
    \item $\texttt{Pending}$ informs the user that the \gls{se} is in progress.
    \item $\texttt{Finished}$ informs the user that the \gls{se} terminated. No further symbolic state will be discovered.
  \end{enumerate}

  The conversion from symbolic states to status messages is accomplished by the function \texttt{display} (\cref{fig:display}).
  The final bug finder (\code{bug\_finder} in \cref{fig:display}) starts \gls{se} with a singleton task queue consisting of an initial symbolic state and applies \code{display} on the results.

  \begin{figure}[tb]
    \centering
    \begin{minted}[escapeinside=//,mathescape=true, frame=single, framesep=5pt]{ocaml}
let display state =
  if state is None then
    Finished
  else if is_stuck state then
    BugFound(state)
  else
    Pending

let bug_finder /$p$/ =
    map display (run [ /$\sst{\syn{true}}{id}{p}$/ ])
    \end{minted}
    \Description{Status Message Conversion, Bug Finder Entry Point}%
    \caption{Status Message Conversion, Bug Finder Entry Point}%
    \label{fig:display}
  \end{figure}

  \paragraph*{Detecting Bugs}
  
  The function \code{display} uses \code{is\_stuck} (\cref{fig:isstuck}) to filter error states. 
  As discussed in \cref{sub:sound_complete_bug_finding}, \code{is\_stuck} considers a symbolic state \(\sst{\pathcond}{\sstore}{\prog}\) an error state if \(p\) starts with $\syn{fail}$.

  \begin{figure}[tb]
    \centering
    \begin{minted}[escapeinside=//,mathescape=true, frame=single, framesep=5pt]{ocaml}
let is_stuck /$\sst{\varphi}{S}{p}$/ =
  match /$p$/ with
  | /$\syn{fail}$/ -> True
  | /$p \ \syn{;}$/ _ -> is_stuck /$p$/
  | _ -> False
    \end{minted}
    \Description{Classifying Error States}%
    \caption{Classifying Error States}%
    \label{fig:isstuck}
  \end{figure}

  \subsection{Relative Soundness and Completeness of the Symbolic Executor}%
  \label{sub:soundn-compln-se-engine}

  T-soundness and T-completeness are expressed assuming a ``yes/no'' oracle.
  For example, a T-sound bug finder only responds with \textit{yes} to \textit{Is there a bug in my program?} if there \emph{is} a bug that manifests for a concrete input.
  However, fuzzers and symbolic bug finders generally do not terminate unless interrupted, as in the case of our symbolic bug finder from the previous section.
  Thus, we adapt these notions to testers producing a \emph{stream} of results.
  We call the resulting concepts \emph{relative} soundness/completeness.
  They address the question of whether \emph{an input in a stream exposes a bug.}

  More precisely, a bug finder is \emph{relatively sound} if any bug report in the generated stream corresponds to bug-exposing input(s).
  It is \emph{relatively complete} if it will eventually report a bug for any given bug-triggering input.
  In other words, both terms are relative to the time provided to the bug finder.
  Moreover, relative completeness is parametric in a concrete bug-triggering input.
  Though formally weaker than T-completeness due to the additional precondition, it provides the strong guarantee a \emph{specific} bug will be found, not just \emph{any} bug.
  In the context of \gls{se}, relative \emph{soundness} additionally permits conveying false positives with \emph{unsatisfiable path constraints;}
  excluding these requires a complete constraint solver that correctly classifies all unsatisfiable path conditions.
  Provided such a solver, it is an easy exercise to construct a T-sound and T-complete bug finder from a relatively sound/complete one, as described in \cref{sub:tsound_tcomplete_bug_finder}.

  In this section, we investigate \gls{se}-specific notions of relative soundness and completeness, asserting that a \emph{symbolic executor} explores \emph{only} reachable symbolic states (soundness) and \emph{all} reachable states (completeness). 
  The subsequent \cref{sub:rel_sound_compl_bug_finder} formalizes the more general notions of relative soundness and completeness and applies the results from this section to prove that the bug finder from \cref{sec:implem} satisfies these properties.

%

  To conveniently express predicates over streams of symbolic states or status messages, we use \gls{ltl}~\cite[Chapter~5]{Baier2008} notation.
  \gls{ltl} formulas describe infinite streams of elements of some carrier set \(A\).
  Atomic \gls{ltl} formulas are predicates over \(A\)-elements, or, equivalently, subsets of \(A\).
  For example, the \gls{ltl} formula \(P=\{n\,\vert{}\,\exists m, 2\cdot{}m=n\}\) is satisfied by the stream \(4,5,6,\ldots\) since its \emph{first} element is even.
  The formula \(\always{}P\) (``globally'') describes streams in which \emph{all} elements are even; \(\eventually{}P\) (``eventually'') matches streams with \emph{some} even number.
  The property \(\always\eventually{}P\) holds for streams with infinitely many even numbers (``every position in the stream must eventually be succeeded by an even number'').
  In addition to ``\(\always\)'' and ``\(\eventually\),'' we use logical implication ``\(\ltlimplies\).''
  For example, \(\{42\}\ltlimplies(\eventually\{0,1\})\) means that any stream whose first element is 42 must contain either 0 or 1 at a later position.

  
  We write \(\stream\vDash\ltlformula\) if the stream \(\stream=\stream_0,\stream_1,\dots\) satisfies the \gls{ltl} formula \(\ltlformula\);
  $\stream\nvDash\ltlformula$ means that \(\stream\) does \emph{not} satisfy \(\ltlformula\).
  Formally, the semantics of LTL formulas is inductively defined as follows:
  \begin{alignat*}{2}
    \stream&\vDash  P & \ \Leftrightarrow \ & P(\stream_0)\\
    \stream&\vDash \syn{$\square$} \varphi & \ \Leftrightarrow \ &\forall i \ge 0, \stream_i\stream_{i + 1}... \vDash \varphi\\
    \stream&\vDash \syn{$\lozenge$} \varphi & \ \Leftrightarrow \ & \exists i \ge 0, \stream_i\stream_{i + 1}... \vDash \varphi\\
    \stream&\vDash \ltlformula_1\ltlimplies{}\ltlformula_2 & \ \Leftrightarrow \ & \stream \nvDash \ltlformula_1\text{ or }\stream\vDash\ltlformula_2
  \end{alignat*}

  Our symbolic executor produces a stream of optional (possibly \texttt{None}) symbolic states from a list \(l\) of input states.
  To be sound, every symbolic state in the stream must be \emph{reachable} from some state in \(l\).
  The \gls{ltl} formula $\reachableFrom(l)$ asserts that the first element in a stream is \texttt{None} or a symbolic state reachable from a state in $l$.

  \setCoqFilename{WiSE.implem.bugfinder_proof}
  \begin{definition}[Reachable From][reachable_from]
    \hspace*{10pt}$\reachableFrom(l) := \{ \sstate' \ | \ \exists \sstate \in l, \sstate \to^*_\textrm{sym} \sstate' \} \ \syn{$\vee$} \ \{ \texttt{None} \}$
  \end{definition}

  Our bug finder's soundness depends on the single-step execution function \texttt{expand} correctly implementing the relation \(\to_\textrm{sym}\):
  
  \begin{theorem}[Correct expansion][expand_sound]
    \hspace*{10pt}$\sstate \to_\textrm{sym}\sstate' \Leftrightarrow \sstate' \in \texttt{expand}(\sstate)$
  \end{theorem}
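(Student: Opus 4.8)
The plan is to prove the biconditional by structural induction on the program $p$ in the symbolic state $\sstate = \sst{\varphi}{S}{p}$, establishing both directions simultaneously in each case. This choice is dictated by the shape of \code{expand} (\cref{fig:expand}), which recurses structurally on $p$, and by the fact that every rule of the symbolic semantics (\cref{fig:symbolic}) is syntax-directed by the leading statement of $p$. The goal then reduces to checking, for each syntactic form of $p$, that the list returned by \code{expand} enumerates exactly the $\to_\textrm{sym}$-successors of $\sstate$.

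Most cases are non-recursive and follow by direct inspection. For $p = \syn{skip}$ and $p = \syn{fail}$, \code{expand} returns the empty list, which matches the fact that these programs are stuck: no rule of \cref{fig:symbolic} has a conclusion whose program counter is a bare $\syn{skip}$ or $\syn{fail}$, so there is no $\sstate'$ on either side. For an assignment $x \ \syn{=} \ e$, a conditional, a loop, and the special form $\syn{skip} \ \syn{;} \ c$, the one- or two-element list produced by \code{expand} corresponds precisely to the one or two applicable rules, with the path condition, store update, and residual program in each emitted state agreeing syntactically with the matching rule's conclusion. In each of these cases the ``$\Rightarrow$'' direction follows by case analysis on the applicable rule(s) and the ``$\Leftarrow$'' direction by reading off which rule justifies each list entry.

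The only case requiring the induction hypothesis is sequencing $p = c_1 \ \syn{;} \ c_2$ with $c_1 \neq \syn{skip}$ (the $c_1 = \syn{skip}$ subcase is caught earlier by the $\syn{skip} \ \syn{;} \ c$ pattern). Here \code{expand} recurses on $\sst{\varphi}{S}{c_1}$ and re-appends $c_2$ to every residual program it returns. For ``$\Rightarrow$'', a step $\sst{\varphi}{S}{c_1 \ \syn{;} \ c_2} \to_\textrm{sym} \sstate'$ can only arise from the congruence rule---the $\syn{skip}$-elimination rule is inapplicable because $c_1 \neq \syn{skip}$---so it is induced by a step $\sst{\varphi}{S}{c_1} \to_\textrm{sym} \sst{\varphi'}{S'}{c_1'}$ obtained by inverting that rule's premise. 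The induction hypothesis then places $\sst{\varphi'}{S'}{c_1'}$ in $\texttt{expand}(\sst{\varphi}{S}{c_1})$, and the re-appending of $c_2$ yields $\sstate' = \sst{\varphi'}{S'}{c_1' \ \syn{;} \ c_2}$ in $\texttt{expand}(\sstate)$. The ``$\Leftarrow$'' direction reverses this: any element of $\texttt{expand}(\sstate)$ has the form $\sst{\varphi'}{S'}{c_1' \ \syn{;} \ c_2}$ with $\sst{\varphi'}{S'}{c_1'} \in \texttt{expand}(\sst{\varphi}{S}{c_1})$, the induction hypothesis supplies the inner step, and the congruence rule lifts it.

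I expect the sequencing case to be the main obstacle, for two intertwined reasons. First, one must align the pattern-match priority of \code{expand}---where $\syn{skip} \ \syn{;} \ c$ shadows the general $c_1 \ \syn{;} \ c_2$ branch---with the two \emph{separate} symbolic rules for sequencing, and in particular argue that when $c_1 \neq \syn{skip}$ the congruence rule is the \emph{unique} source of a symbolic step. This hinges on the auxiliary observation that $\syn{skip}$ admits no $\to_\textrm{sym}$ step, so no spurious successor can appear. Second, the comprehension in the $c_1 \ \syn{;} \ c_2$ branch must be shown to both preserve and reflect list membership, so that membership in the mapped list is equivalent to membership in $\texttt{expand}(\sst{\varphi}{S}{c_1})$ paired with the $c_2$-re-appending; this bookkeeping is routine but is exactly where the induction hypothesis is threaded through.
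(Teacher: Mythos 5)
Your proof is correct and is in substance the paper's own: the paper proves ``$\Rightarrow$'' by induction on the $\to_{\textrm{sym}}$-derivation and ``$\Leftarrow$'' by structural induction on the program counter, and your single structural induction on $p$ with rule inversion for ``$\Rightarrow$'' yields exactly the same case analysis, because the sequencing congruence rule is the only recursive case under either induction scheme, so the two measures coincide here. One small misattribution in your discussion of obstacles: the auxiliary fact that $\syn{skip}$ admits no $\to_{\textrm{sym}}$-step is what makes the $\syn{skip}\ \syn{;}\ c$ branch exhaustive (it rules out extra successors via the congruence rule), whereas in the $c_1 \neq \syn{skip}$ branch the skip-elimination rule is already excluded purely syntactically---though your case-by-case argument itself handles both branches correctly.
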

  \begin{proof}
    By induction on (1) the $\to_\textrm{sym}$-derivation (direction ``\(\Rightarrow\)'') and (2) the structure of $\sstate$'s program counter (``\(\Leftarrow\)'').
  \end{proof}

  Now we can state the relative soundness theorem, whose proof follows directly from \cref{coq:expand_sound}:

  \begin{theorem}[Relative Soundness of Symbolic Execution][run_sound]
    \hspace*{10pt}$\texttt{run}(l) \vDash \syn{$\square$}\reachableFrom(l)$
  \end{theorem}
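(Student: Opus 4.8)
The plan is to prove a strengthened statement by co-induction on the stream produced by $\texttt{run}$, carrying a loop invariant on the work queue. Attacking $\texttt{run}(l) \vDash \always\reachableFrom(l)$ directly by co-induction fails: after one unfolding the recursive call is $\texttt{run}(q' + \texttt{expand}(\sstate))$ for a \emph{new} queue, and a naive co-induction hypothesis would only supply reachability from that new queue rather than from the original input $l$. I would therefore generalize and prove that for \emph{every} queue $q$ all of whose elements are $\to^*_\textrm{sym}$-reachable from some state in $l$, we have $\texttt{run}(q) \vDash \always\reachableFrom(l)$. Writing $I(q)$ for the invariant ``$\forall \sstate \in q,\ \exists \sstate_0 \in l,\ \sstate_0 \to^*_\textrm{sym} \sstate$,'' the theorem is the instance $q = l$, for which $I(l)$ holds by reflexivity of $\to^*_\textrm{sym}$.

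The co-inductive proof proceeds by unfolding $\always\reachableFrom(l)$ into a \emph{head} obligation (the current first element lies in $\reachableFrom(l)$) and a \emph{tail} obligation (the remaining stream satisfies $\always\reachableFrom(l)$), then case-splitting on $q$ following the two branches of $\texttt{run}$. If $q = []$, the yielded head is $\texttt{None}$, which satisfies $\reachableFrom(l)$ through its $\{\texttt{None}\}$ disjunct; the recursive call is $\texttt{run}([])$, and since $I([])$ holds vacuously the co-induction hypothesis discharges the tail. If $q = \sstate :: q'$, the yielded head is $\sstate$; by $I(q)$ there is $\sstate_0 \in l$ with $\sstate_0 \to^*_\textrm{sym} \sstate$, so $\sstate$ lies in $\reachableFrom(l)$, discharging the head obligation.

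The crux is re-establishing the invariant for the tail, i.e.\ showing $I(q' + \texttt{expand}(\sstate))$ so that the co-induction hypothesis applies to $\texttt{run}(q' + \texttt{expand}(\sstate))$. For the elements inherited from $q'$ this is immediate, since they already belong to $q$. For each freshly enqueued successor $\sstate' \in \texttt{expand}(\sstate)$, I would invoke \cref{coq:expand_sound} to obtain $\sstate \to_\textrm{sym} \sstate'$, then compose it with the witness $\sstate_0 \to^*_\textrm{sym} \sstate$ supplied by the invariant, using transitivity of $\to^*_\textrm{sym}$ to conclude $\sstate_0 \to^*_\textrm{sym} \sstate'$. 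Hence every element of the new queue is reachable from $l$ and $I$ is preserved.

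The main obstacle is less mathematical than structural: making the co-induction go through in Coq requires precisely the generalization over arbitrary invariant-satisfying queues above, so that $\reachableFrom(l)$ stays anchored to the original input list rather than drifting with each step, and it requires the co-recursive appeal to the hypothesis to be syntactically guarded by the $\texttt{yield}$ in each branch of $\texttt{run}$. Once the statement is suitably generalized, the remaining steps---the $\texttt{None}$ case, the head membership, and the transitive extension through $\texttt{expand}$---are routine given \cref{coq:expand_sound} and transitivity of $\to^*_\textrm{sym}$.
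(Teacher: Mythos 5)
Your proposal is correct and matches the paper's approach: the paper's one-line proof (``follows directly from \cref{coq:expand_sound}'') relies on exactly the co-inductive argument you spell out, with \cref{coq:expand_sound} plus transitivity of $\to^*_\textrm{sym}$ as the key ingredients. Your generalization to an arbitrary queue whose elements are all reachable from $l$ is precisely the strengthening needed to make the co-induction go through in Coq, so you have filled in the scaffolding the paper leaves implicit rather than taken a different route.
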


  By outputting a \texttt{None} value in the stream, \texttt{run} signals that all reachable states have been \emph{exhaustively explored.}
  Consequently, the stream should only consist of \texttt{None} after the first issued one, guaranteeing that we can \textit{safely} terminate \gls{se} after the first \texttt{None}.

  \begin{theorem}[Sound Termination of Symbolic Execution][run_finished]
    \hspace*{10pt}$\texttt{run}(l) \vDash \syn{$\square$}(\texttt{None} \ \syn{$\to$} \ \syn{$\square$}\texttt{None}) $
  \end{theorem}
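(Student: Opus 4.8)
The plan is to prove the $\always$-property through a suffix-closed coinductive invariant on the streams generated by \texttt{run}, exploiting one structural fact read directly off the definition of \texttt{run} (\cref{fig:run}): the executor emits \texttt{None} in exactly one situation, namely when its input queue is empty. In the cons-case $\sst{\varphi}{S}{p}::l'$ it always yields the genuine symbolic state $\sst{\varphi}{S}{p}$, never \texttt{None}. Consequently the head of $\texttt{run}(l)$ equals \texttt{None} if and only if $l = []$, so the statement $\always(\texttt{None} \ltlimplies \always\texttt{None})$ really amounts to ``once the queue runs empty, it stays empty, and so the output stays \texttt{None}.''

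First I would establish the absorbing lemma $\texttt{run}([]) \vDash \always\texttt{None}$. Since $\texttt{run}([]) = \texttt{None} :: \texttt{run}([])$, the stream is the constant \texttt{None} stream, and by coinduction every one of its suffixes begins with \texttt{None}; this is exactly $\always\texttt{None}$. This is the only genuinely coinductive step, and the corecursive call is immediately guarded.

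Second, I would record that the family $\{\,\texttt{run}(l) \mid l \text{ a queue}\,\}$ is closed under taking tails: the tail of $\texttt{run}(l)$ is again $\texttt{run}(l')$, where $l' = []$ when $l = []$ and $l' = l_{\mathrm{tl}} + \texttt{expand}(l_{\mathrm{hd}})$ otherwise. A set of streams that contains $\stream$, is closed under tails, and all of whose members satisfy a formula $\psi$ is precisely the invariant needed for an $\always$-introduction principle concluding $\stream \vDash \always\psi$; so this family serves as the invariant.

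Third, taking $\psi := (\texttt{None} \ltlimplies \always\texttt{None})$, I verify that every $\texttt{run}(l)$ satisfies $\psi$: unfolding the LTL semantics, this says that if the head of $\texttt{run}(l)$ is \texttt{None} then $\texttt{run}(l) \vDash \always\texttt{None}$. When $l \neq []$ the head is a real symbolic state, the premise fails, and the implication holds vacuously; when $l = []$ the conclusion is the absorbing lemma. Feeding this together with the tail-closure invariant into the $\always$-introduction rule yields $\texttt{run}(l) \vDash \always(\texttt{None} \ltlimplies \always\texttt{None})$, the desired claim. The hard part will not be the case analysis on emptiness, which is routine, but the coinductive bookkeeping: phrasing the $\always$-introduction as a sound coinductive principle over the invariant and discharging the guardedness obligation for $\texttt{run}([]) \vDash \always\texttt{None}$. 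Once that infrastructure is in place, the remaining reasoning is a short case split.
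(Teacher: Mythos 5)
Your proposal is correct. The paper states this theorem without any prose proof (it is discharged directly in the Coq development), and your argument---observing that the head of $\texttt{run}(l)$ is \texttt{None} iff $l = []$, proving the absorbing lemma $\texttt{run}([]) \vDash \square\texttt{None}$ by guarded coinduction, and then establishing $\square(\texttt{None} \to \square\texttt{None})$ via the tail-closed invariant $\{\texttt{run}(l') \mid l' \text{ a queue}\}$---is precisely the natural mechanized proof, with no gaps.
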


  We consider the symbolic executor \emph{relatively complete} if, for any initial symbolic state $\sstate$, it generates \emph{at least} all $\to^*_\textrm{sym}$-successors of $\sstate$ when started on the list $[\sstate]$.
  We first state a more general theorem (\cref{coq:run_steps_complete}) from which we then conclude the relative soundness.
  \cref{coq:run_steps_complete} asserts that for any state $\sstate$ discovered during \gls{se}, its (direct or indirect) successors will eventually be found.
  Its proof requires us to predict \emph{when exactly} the direct successors of any state \(\sstate\) will be discovered by our breadth-first search strategy.

  \begin{lemma}[Reachability of Successors][run_steps_complete]
    \hspace*{10pt}$\hat{\sigma} \to_\textrm{sym}^* \hat{\sigma}' \Rightarrow \forall l, \texttt{run}(l) \vDash \syn{$\square$}(\hat{\sigma} \ \syn{$\to$} \ \syn{$\lozenge$} \hat{\sigma}')$
  \end{lemma}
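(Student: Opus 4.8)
The plan is to reduce the globally-quantified ``leads-to'' formula to a purely positional fact about the breadth-first queue and then chain single steps along the derivation $\hat{\sigma}\to^*_\textrm{sym}\hat{\sigma}'$. The first ingredient is a structural observation about \texttt{run}: unfolding its definition once shows that $\texttt{run}(l)$ yields its head and then continues as $\texttt{run}$ on the queue $\mathit{tail}\mathbin{+}\texttt{expand}(\mathit{head})$ (or as $\texttt{run}([])$ when $l$ is empty). Iterating this, the $i$-th suffix of $\texttt{run}(l)$ is again $\texttt{run}(l_i)$ for some queue $l_i$. This yields a convenient box-generalization principle: to prove $\texttt{run}(l)\vDash\always\,\psi$ for all $l$, it suffices to prove $\texttt{run}(l)\vDash\psi$ for all $l$, since every suffix is itself a $\texttt{run}(l_i)$ and the outer universal over $l$ already covers it.

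The crucial lemma is positional: if $\hat{\sigma}'\in l$ then $\texttt{run}(l)\vDash\eventually\,\hat{\sigma}'$. I would prove this by induction on the index $k$ of the first occurrence of $\hat{\sigma}'$ in $l$. Unfolding \texttt{run} once yields the head and recurses on $\mathit{tail}\mathbin{+}\texttt{expand}(\mathit{head})$; because new states are appended at the \emph{back}, the occurrence of $\hat{\sigma}'$ shifts from index $k$ to index $k-1$ in the recursive queue, untouched by the expansion. Hence after exactly $k$ yields it reaches the front and is emitted, establishing $\eventually\,\hat{\sigma}'$. This is precisely the step the paper flags: predicting \emph{when} a queued state surfaces under BFS. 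I expect this to be the main obstacle, since in a proof assistant it requires maintaining the invariant that appending to the back preserves the relative order of the front of the queue and strictly decrements the target's index each step.

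With these in hand, the single-step case is short. For $\hat{\sigma}\to_\textrm{sym}\hat{\sigma}''$ I would show $\texttt{run}(l)\vDash(\hat{\sigma}\,\ltlimplies\,\eventually\,\hat{\sigma}'')$ for all $l$: if the head of $l$ is not $\hat{\sigma}$ the implication is vacuous; if it is $\hat{\sigma}$, unfold once, and by Correct expansion (\cref{coq:expand_sound}) we get $\hat{\sigma}''\in\texttt{expand}(\hat{\sigma})$, so $\hat{\sigma}''$ lies in the new queue and the positional lemma delivers $\eventually\,\hat{\sigma}''$. Box-generalizing gives $\texttt{run}(l)\vDash\always(\hat{\sigma}\,\ltlimplies\,\eventually\,\hat{\sigma}'')$ for all $l$.

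Finally I would conclude by induction on the length of the derivation $\hat{\sigma}\to^*_\textrm{sym}\hat{\sigma}'$. In the base case $\hat{\sigma}=\hat{\sigma}'$, the local formula $\hat{\sigma}\,\ltlimplies\,\eventually\,\hat{\sigma}$ holds trivially (the witness position is $0$), and box-generalization yields the goal. In the inductive case $\hat{\sigma}\to_\textrm{sym}\hat{\sigma}''\to^*_\textrm{sym}\hat{\sigma}'$, I would combine the single-step result $\always(\hat{\sigma}\,\ltlimplies\,\eventually\,\hat{\sigma}'')$ with the induction hypothesis $\always(\hat{\sigma}''\,\ltlimplies\,\eventually\,\hat{\sigma}')$ using a small transitivity-of-response lemma, $\always(a\,\ltlimplies\,\eventually\,b)\wedge\always(b\,\ltlimplies\,\eventually\,c)\Rightarrow\always(a\,\ltlimplies\,\eventually\,c)$, proved by taking the witness position where $b$ occurs and re-applying the second response at that suffix (valid since it holds globally). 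This delivers $\texttt{run}(l)\vDash\always(\hat{\sigma}\,\ltlimplies\,\eventually\,\hat{\sigma}')$ for every $l$, as required.
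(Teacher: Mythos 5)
Your proof is correct, and its skeleton---induction on the length of the derivation $\hat{\sigma}\to^*_\textrm{sym}\hat{\sigma}'$, with the single-step case discharged via the correctness of \texttt{expand} (\cref{coq:expand_sound}) and the FIFO discipline of the queue---is the same as the paper's. Where you genuinely differ is in how the breadth-first-search argument is carried out. The paper's proof predicts the \emph{exact} index at which the direct successor $\hat{\sigma}''$ surfaces in the stream: it occurs at position $i_{\hat{\sigma}} + |l| + 1$ or $i_{\hat{\sigma}} + |l| + 2$, a computation that explicitly relies on \texttt{expand} returning at most two successors, and the induction hypothesis is then applied at that position. You instead prove a purely qualitative positional lemma---a state occurring at index $k$ in the queue is yielded after exactly $k$ steps, by induction on $k$, since appending expansions at the back only decrements the target's index---and assemble the pieces with two reusable LTL facts: every suffix of $\texttt{run}(l)$ is again of the form $\texttt{run}(l_i)$ (your box-generalization principle), and transitivity of the leads-to pattern $\always(a\,\ltlimplies\,\eventually\,b)$. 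This buys you two things: your argument is independent of the branching factor, so it would survive $n$-ary branching constructs where the paper's ``$+1$ or $+2$'' bookkeeping would need revision, and it sidesteps the slight imprecision in the paper's proof about which queue $|l|$ denotes at position $i_{\hat{\sigma}}$. What the paper's quantitative version buys in exchange is a sharper intermediate fact---an explicit bound on \emph{when} the successor appears---which could be reused to bound the latency of bug discovery rather than only its eventuality.
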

  \begin{proof}
    By induction on the length of the derivation $\sstate \to_\textrm{sym}^* \sstate'$.
    If $\sstate' = \sstate$, then the theorem is obvious. Otherwise, we have $s \to_\textrm{sym} \sstate''$ and $\sstate'' \to_\textrm{sym}^* \sstate'$ for some intermediate state $\sstate''$.
    Now suppose that $\sstate{}$ occurs at a given position $i_\sstate$ in the stream $\texttt{run(l)}$, 
    the direct successor of $\sstate''$ necessarily occurs at index $i_{\sstate''} = i_\sstate + |l| + 1$ or $i_{\sstate''} = i_\sstate + |l| + 2$ because \texttt{run} extends the task list by adding all the successors (see \cref{coq:expand_sound}) of the current state at the end of the task list and there are at most two successors for any state.
    Since $\sstate''$ occurs at position $i_{\sstate''}$, we know by induction hypothesis that $\sstate'$ occurs at some index $i_{\sstate''} + n$ for $n \ge 0$.
  \end{proof}

  From \cref{coq:run_steps_complete}, we conclude the completeness result.

  \begin{theorem}[Relative Completeness of Symbolic Execution][run_complete]
    \hspace*{10pt}$\hat{\sigma} \to_\textrm{sym}^* \hat{\sigma}' \Rightarrow \texttt{run}([\hat{\sigma}]) \vDash \syn{$\lozenge$} \hat{\sigma}'$
  \end{theorem}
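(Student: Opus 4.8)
The plan is to derive this theorem directly as a specialization of the more general \cref{coq:run_steps_complete}, instantiating its universally quantified task list at the singleton $l = [\hat{\sigma}]$ and then discharging the leading LTL implication using the concrete shape of the stream $\texttt{run}([\hat{\sigma}])$. Concretely, the hypothesis $\hat{\sigma} \to_\textrm{sym}^* \hat{\sigma}'$ is precisely the premise of \cref{coq:run_steps_complete}, so that lemma yields, for \emph{every} list $l$, the global property $\texttt{run}(l) \vDash \always(\hat{\sigma} \ltlimplies \eventually\hat{\sigma}')$. I instantiate this with $l = [\hat{\sigma}]$.

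Next I peel off the outer $\always$. By its semantics, the enclosed formula holds at every suffix of the stream, in particular at index $0$ (the whole stream), giving $\texttt{run}([\hat{\sigma}]) \vDash \hat{\sigma} \ltlimplies \eventually\hat{\sigma}'$. Unfolding the semantics of $\ltlimplies$, this disjunction states that either the first element of $\texttt{run}([\hat{\sigma}])$ fails the atomic predicate $\hat{\sigma}$, or $\eventually\hat{\sigma}'$ holds on the stream. To eliminate the first disjunct I observe that $\texttt{run}([\hat{\sigma}])$ \emph{begins} by emitting exactly $\hat{\sigma}$: unfolding a single step of \texttt{run} (\cref{fig:run}) on a non-empty task list shows that the head of the produced stream is the head of the input queue, here $\hat{\sigma}$. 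Thus $\texttt{run}([\hat{\sigma}]) \vDash \hat{\sigma}$ as an atomic predicate, the antecedent of the implication is satisfied, and I conclude $\texttt{run}([\hat{\sigma}]) \vDash \eventually\hat{\sigma}'$, which is exactly the goal.

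The proof is therefore almost mechanical once \cref{coq:run_steps_complete} is available: the genuine content—the breadth-first indexing argument that locates every successor at a predictable position in the stream—has already been discharged there. I expect no real obstacle. The only point deserving a moment of care is matching the atomic LTL predicate $\hat{\sigma}$ against the concrete head of $\texttt{run}([\hat{\sigma}])$, i.e.\ computing that the first yielded element is the dequeued state $\hat{\sigma}$ rather than one of its successors or a \texttt{None} token; this is a one-step evaluation of the recursive definition. The remaining moves are purely the specialization of the lemma, the instantiation of $\always$ at index $0$, and the elimination of $\ltlimplies$.
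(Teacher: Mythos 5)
Your proof is correct and follows exactly the paper's own argument: instantiate \cref{coq:run_steps_complete} at the singleton list $[\hat{\sigma}]$, note that the first element emitted by \texttt{run}$([\hat{\sigma}])$ is $\hat{\sigma}$ itself, and discharge the implication at position $0$ of the stream. The paper states this in one sentence; you have merely spelled out the unfolding of $\always$ and $\ltlimplies$ in more detail.
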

  \begin{proof}
    It suffices to see that the first element of the stream $\texttt{run}([\sstate])$ is $\sstate$ and then to apply \cref{coq:run_steps_complete} at position 0.
  \end{proof}


  \subsection{Relative Soundness and Completeness of the Symbolic Bug Finder}
  \label{sub:rel_sound_compl_bug_finder}

  In the previous section, we established the relative soundness and completeness of the symbolic executor regarding the symbolic semantics.
  Now, we define \emph{relatively sound bug discovery}, a property relevant to bug finders in general, including fuzzers.
  We assume the interface of the function \findBugs{} from \cref{sec:implem}.
  That is, a bug finder outputs a stream of status messages that are either \texttt{Pending}, \texttt{Finished}, or \texttt{BugFound}.
  For a status message to be \emph{valid,} a \texttt{BugFound} message must report a symbolic state corresponding to a concrete set of bug-triggering states.
  The following definition of \validStatus{} uses a function \(\concrete(\sstate)\) mapping from a symbolic state to the set of concrete states it represents.
  Formally, \(\concrete(\sstate)=\{\cstate\,\vert\,\exists\cstore_0,\cstate\simeq_{\cstore_0}\sstate\}\).



  \begin{definition}[Valid Status][ValidStatus]
    \hspace*{10pt}$\validStatus{}(p) := \\\hspace*{20pt}\{ \texttt{BugFound}(\hat{\sigma}) \ | \ \forall \sigma, \sigma \in \textit{Concrete}(\hat{\sigma}) \Rightarrow \textit{IsBug}(p, \sigma) \}\\\hspace*{20pt} \syn{$\vee$} \ \texttt{Pending} \ \syn{$\vee$} \ \texttt{Finished}$
  \end{definition}
  
  According to this definition, a bug report is always valid for symbolic states with unsatisfiable path conditions since these have an empty \(\concrete(\sstate)\).
  How to get from there to real T-soundness is the focus of \cref{sub:tsound_tcomplete_bug_finder}.
  The definition of relative soundness is easy:
  A relatively sound bug finder only outputs valid status messages.
  
  \begin{theorem}[Relatively Sound Bug Discovery][relative_soundness]
    \hspace*{10pt}$\findBugs(p) \vDash \always\validStatus{}(p)$
  \end{theorem}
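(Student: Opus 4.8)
The plan is to unfold \(\findBugs(p)=\texttt{map}\;\texttt{display}\;(\texttt{run}\;[\sst{\syn{true}}{id}{p}])\) and reduce the goal to a single pointwise implication stating that \texttt{display} turns every reachable symbolic state into a valid status message. First I would invoke \cref{coq:run_sound} to obtain \(\texttt{run}([\sst{\syn{true}}{id}{p}])\vDash\always\reachableFrom([\sst{\syn{true}}{id}{p}])\), which—unfolding \(\always\) on an atomic predicate—says exactly that every stream element \(\sstate\) is either \texttt{None} or satisfies \(\sst{\syn{true}}{id}{p}\to^*_\textrm{sym}\sstate\). Second, I would establish a small generic lifting lemma: if \(\stream\vDash\always{}P\) and \(P(a)\Rightarrow Q(f(a))\) for all \(a\), then \(\texttt{map}\;f\;\stream\vDash\always{}Q\); since \texttt{map} commutes with taking suffixes and \(\always\) on an atomic predicate quantifies over all positions, this reduces to the pointwise fact that \(Q((\texttt{map}\;f\;\stream)_i)=Q(f(\stream_i))\). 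Instantiating with \(P=\reachableFrom([\sst{\syn{true}}{id}{p}])\), \(f=\texttt{display}\), and \(Q=\validStatus(p)\), the whole theorem collapses to showing \(\texttt{display}(\sstate)\in\validStatus(p)\) whenever \(\sstate=\texttt{None}\) or \(\sst{\syn{true}}{id}{p}\to^*_\textrm{sym}\sstate\).

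I would discharge this pointwise goal by inspecting \texttt{display} (\cref{fig:display}). If \(\sstate=\texttt{None}\), display yields \texttt{Finished}, which lies in \(\validStatus(p)\) by definition; if \(\sstate\) is not stuck, it yields \texttt{Pending}, again valid. The only substantive case is when \(\texttt{is\_stuck}\;\sstate\) holds, where display yields \(\texttt{BugFound}(\sstate)\); here I must show \(\forall\cstate\in\concrete(\sstate),\ \textit{IsBug}(p,\cstate)\). Fix such a \(\cstate\), so there is a \(\cstore_0\) with \(\cstate\stateconcr{\cstore_0}\sstate\). For the reachability conjunct, the pair \((\cstore_0,\sstate)\) together with \(\sst{\syn{true}}{id}{p}\to^*_\textrm{sym}\sstate\) and \(\cstate\stateconcr{\cstore_0}\sstate\) is \emph{exactly} a witness for \(\reachsym(p,\cstate)\), so \cref{coq:Reach_sound} gives \(\reach(p,\cstate)\). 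For the stuckness conjunct, I observe that \texttt{is\_stuck} (\cref{fig:isstuck}) is a purely syntactic test on the program counter, and that—because IMP expressions are total and every non-\(\syn{fail}\) head statement admits a step—the semantic predicate \(\textit{Stuck}\) depends only on the program counter, not on the store. Since \(\cstate\stateconcr{\cstore_0}\sstate\) forces \(\cstate\) and \(\sstate\) to share their program counter, \(\textit{Stuck}(\cstate)\) follows from \(\texttt{is\_stuck}\;\sstate\). (When \(\concrete(\sstate)\) is empty, i.e.\ the path condition is unsatisfiable, the obligation holds vacuously.)

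I expect the main obstacle to lie in the stuck case, and specifically in formalizing the store-independence of \(\textit{Stuck}\): one needs an auxiliary lemma stating that \(\texttt{is\_stuck}\;\sst{\pathcond}{\sstore}{q}\) is equivalent to \(\textit{Stuck}(\st{\cstore}{q})\) for \emph{every} \(\cstore\), which in turn rests on a characterization of the stuck IMP states as precisely those whose program counter starts with \(\syn{fail}\). This requires a case analysis over the program counter mirroring the recursion of \texttt{is\_stuck} (handling the sequential-composition case by induction) together with an appeal to totality of boolean-guard evaluation so that \(\syn{if}\) and \(\syn{while}\) states always progress. By contrast, the reachability conjunct is immediate from \cref{coq:Reach_sound}, and the map/\(\always\) lifting is a routine congruence once \texttt{map} is seen to commute with suffixes.
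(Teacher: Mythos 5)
Your proposal is correct and follows essentially the same route as the paper's (very terse) proof: soundness of \texttt{run} plus the observation that \texttt{display} emits \texttt{BugFound} only for stuck states, with \cref{coq:Reach_sound} and the syntactic characterization of stuckness closing the pointwise obligation. Note that the paper's proof cites \cref{coq:run_steps_complete}, which appears to be a slip---the lemma actually needed is \cref{coq:run_sound}, exactly the one you invoke---and your explicit treatment of the \texttt{is\_stuck}/\textit{Stuck} correspondence spells out what the paper compresses into ``\texttt{display} filters only stuck states.''
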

  \begin{proof}
    Direct application of \cref{coq:run_steps_complete}, 
    and the fact that \texttt{display} filters only stuck states.
  \end{proof}

  \paragraph*{Completeness}
  
  Relative completeness means that any given bug will eventually be reported in the stream of status messages.
  As mentioned before, this property is both \emph{weaker}---due to the precondition of providing a concrete bug-triggering input---and \emph{stronger}---since it guarantees that a pre-chosen bug will be found---than T-completeness.
  The subsequent completeness theorem uses a function \(\symbolic(\cstate)\) mapping from a concrete state to the set of symbolic states representing it (the inverse of \concrete).
  More formally, \(\symbolic(\cstate)=\{\sstate\,\vert\,\exists{}V_0,\cstate\simeq_{\cstore_0}\sstate\}\).

  \begin{theorem}[Relatively Complete Bug Discovery][relative_completeness]
    \hspace*{10pt}$\textrm{IsBug}(p, \sigma) \Rightarrow \texttt{find\_bugs}(p) \vDash \syn{$\lozenge$}\textrm{Symbolic}(\sigma)$
  \end{theorem}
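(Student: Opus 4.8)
The plan is to chain the completeness results already established, tracing the bug $\sigma$ from concrete reachability, to a symbolic witness representing it, to that witness appearing in the stream, and finally to its conversion into a \texttt{BugFound} message by \code{display}. First I would unfold $\textit{IsBug}(p, \sigma)$ into its two conjuncts, $\reach(p, \sigma)$ and $\textit{Stuck}(\sigma)$. Applying the completeness of symbolic reachability (\cref{coq:Reach_complete}) to the first conjunct yields $\reachsym(p, \sigma)$, which by definition produces an initial store $V_0$ and a symbolic state $\sstate$ with $\sst{\syn{true}}{id}{p} \to^*_\textrm{sym} \sstate$ and $\sigma \simeq_{V_0} \sstate$. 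This last relation is exactly the witness that $\sstate \in \symbolic(\sigma)$, so the symbolic state we have reached already lies in the target set of the atomic \gls{ltl} predicate.

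Next I would invoke the relative completeness of the symbolic executor (\cref{coq:run_complete}). Since $\findBugs(p)$ runs \texttt{run} on the singleton list $[\sst{\syn{true}}{id}{p}]$ and we have $\sst{\syn{true}}{id}{p} \to^*_\textrm{sym} \sstate$, \cref{coq:run_complete} guarantees $\texttt{run}([\sst{\syn{true}}{id}{p}]) \vDash \eventually \sstate$; that is, $\sstate$ occurs at some finite position $i$ in the output stream of \texttt{run}. It then remains to transport this occurrence across the \texttt{map display} that separates \texttt{run} from \findBugs.

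For that transport, I would show that \code{display} turns this occurrence of $\sstate$ into a $\texttt{BugFound}(\sstate)$ message matching $\symbolic(\sigma)$. By the definition of $\simeq_{V_0}$ (\cref{coq:sim}), the state $\sstate$ shares its program counter with $\sigma$. Since stuckness depends only on the program counter—a state is stuck iff its next statement is \syn{fail}, as argued in \cref{sub:sound_complete_bug_finding}—the hypothesis $\textit{Stuck}(\sigma)$ transfers directly to \code{is\_stuck}$(\sstate) = \texttt{True}$. Consequently \code{display} maps $\sstate$ to $\texttt{BugFound}(\sstate)$, and because $\sstate \in \symbolic(\sigma)$, this message satisfies the atomic formula. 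As $\findBugs(p) = \texttt{map}\ \texttt{display}\ (\texttt{run}\ [\sst{\syn{true}}{id}{p}])$ applies \code{display} pointwise, the $\texttt{BugFound}(\sstate)$ message appears at the very position $i$ where $\sstate$ appeared in \texttt{run}'s stream, giving $\findBugs(p) \vDash \eventually \symbolic(\sigma)$.

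The main obstacle I anticipate is this interaction between \texttt{map display} and the $\eventually$ modality. I expect to need a small commutation lemma: if a stream satisfies $\eventually P$ and $f$ sends every element satisfying $P$ to one satisfying $Q$, then the mapped stream satisfies $\eventually Q$. The delicate ingredient of that lemma is verifying that \code{display} genuinely preserves the relevant payload—converting a stuck symbolic witness into a \texttt{BugFound} message that still carries that same witness—which hinges entirely on stuckness being a purely syntactic, program-counter-only property that is invariant under concretization.
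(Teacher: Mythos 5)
Your proposal is correct and follows essentially the same route as the paper's proof: transfer the concrete bug to a symbolic witness (you inline \cref{coq:Reach_complete} where the paper cites the equivalent \cref{coq:IsBug_correct}), then apply the relative completeness of \texttt{run} (\cref{coq:run_complete}), and finally argue that \texttt{display} converts the stuck symbolic state into a \texttt{BugFound} message. Your additional care about the \texttt{map}/\(\eventually\) commutation and about stuckness being a program-counter-only property preserved by \(\simeq_{V_0}\) simply makes explicit what the paper compresses into ``\texttt{display} keeps all stuck states in the stream.''
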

  \begin{proof}
    Suppose $\textit{IsBug}(p, \sigma)$. By completeness of bug finding, we have $\textit{IsBug}_\textrm{sym}(p, \sigma)$. The result follows by the completeness of \texttt{run} and the fact that \texttt{display} keeps all stuck states in the stream.
  \end{proof}
  
  \paragraph*{Termination}

  Implementing a complete and yet always \emph{terminating} bug finder is impossible as it would solve the Halting problem.
  Yet, there are programs for which \gls{se} can be exhaustive (e.g., programs without loops nor recursion).
  We define ``sound termination'' as the property that whenever a bug finder emits a \texttt{Finished} message, it will not report anything else from that point on.
  If a soundly terminating, relatively complete bug finder does not report a bug and says \texttt{Finished}, we can be sure that \emph{the tested program is safe.}
  
  \begin{theorem}[Sound Termination of the Bug Finder][sound_termination]
    \hspace*{10pt}$\texttt{find\_bugs}(p) \vDash \syn{$\square$}(\texttt{Finished} \ \syn{$\to$} \ \syn{$\square$}\texttt{Finished})$
  \end{theorem}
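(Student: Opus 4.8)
The plan is to reduce the claim to the already-established sound termination of the raw symbolic executor, \cref{coq:run_finished}, by exploiting the fact that \findBugs{} is nothing but \texttt{run} post-composed with the pointwise relabelling \texttt{display}. Recall that \findBugs{} is defined (\cref{fig:display}) as $\texttt{map display}\,(\texttt{run}\,[\sst{\syn{true}}{id}{p}])$, so position $i$ of the status stream is $\texttt{display}(\stream_i)$, where $\stream = \texttt{run}([\sst{\syn{true}}{id}{p}])$ is the underlying stream of optional symbolic states.

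First I would isolate the key pointwise observation about \texttt{display}: inspecting its three branches in \cref{fig:display}, it emits \texttt{Finished} precisely in the \texttt{None} case and emits \texttt{BugFound} or \texttt{Pending} otherwise. Hence for every state $x$ the biconditional $\texttt{display}(x) = \texttt{Finished} \Leftrightarrow x = \texttt{None}$ holds; equivalently, the atomic predicate \texttt{Finished} (a subset of the status-message type) pulls back along \texttt{display} to exactly the predicate \texttt{None} on optional symbolic states.

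Second, I would prove a small transfer lemma for \gls{ltl} along a pointwise map: for any function $f$ and any \gls{ltl} formula $\varphi$ over atomic predicates, $\texttt{map}\,f\,(\stream) \vDash \varphi$ iff $\stream \vDash \varphi'$, where $\varphi'$ replaces each atomic predicate $P$ with its pullback $\{x \mid P(f(x))\}$. This follows by structural induction on $\varphi$: the atomic case is the definition of the pullback, and the \always{}, \eventually{}, and \ltlimplies{} cases reduce to the induction hypotheses once one notes that the suffix of $\texttt{map}\,f\,(\stream)$ starting at $i$ equals $\texttt{map}\,f$ applied to the suffix of $\stream$ starting at $i$. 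Since the goal formula $\always(\texttt{Finished} \ltlimplies \always\texttt{Finished})$ is of exactly this atomic-predicate shape, specialising the lemma with $f = \texttt{display}$ and substituting the pullback computed above rewrites $\findBugs(p) \vDash \always(\texttt{Finished} \ltlimplies \always\texttt{Finished})$ into $\texttt{run}([\sst{\syn{true}}{id}{p}]) \vDash \always(\texttt{None} \ltlimplies \always\texttt{None})$, which is precisely the instance of \cref{coq:run_finished} that closes the proof.

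The only step requiring genuine care is the transfer lemma, since it must be stated and discharged coinductively over streams with the suffix-based semantics of \always{} and \eventually{}; everything else is an immediate unfolding. In fact, because the formula is fixed and small, one may bypass the general lemma and argue directly: for each $i$, $\texttt{display}(\stream_i) = \texttt{Finished}$ iff $\stream_i = \texttt{None}$, and $\texttt{display}(\stream_j) = \texttt{Finished}$ for all $j \ge i$ iff $\stream_j = \texttt{None}$ for all $j \ge i$, so the two \always{}-formulas hold of their respective streams under identical conditions, and \cref{coq:run_finished} supplies the latter.
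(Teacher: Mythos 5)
Your proposal is correct and takes essentially the same route as the paper: the paper's proof is simply ``Direct consequence of \cref{coq:run_finished},'' which implicitly relies on exactly the reduction you spell out, namely that $\findBugs(p)$ is \texttt{map display} applied to \texttt{run}, that \texttt{display} emits \texttt{Finished} precisely on \texttt{None}, and that the \gls{ltl} property transfers pointwise along the map. Your write-up just makes explicit the transfer argument the paper leaves implicit.
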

  \begin{proof}
    Direct consequence of \cref{coq:run_finished}.
  \end{proof}

  \subsection{A T-Sound and T-Complete Bug Finder}
  \label{sub:tsound_tcomplete_bug_finder}

  A T-sound and T-complete bug finder is a \emph{binary oracle} answering \emph{yes} if, and only if, the tested program is faulty.
  So far, we considered bug finders producing infinite streams of status messages, to which T-soundness and T-completeness do not directly apply.
  Yet, relative soundness and completeness (\cref{coq:relative_completeness}, \cref{coq:relative_soundness}) are sufficiently strong to derive such an oracle.
  Relative soundness ensures for every $\texttt{BugFound}(\hat{\sigma})$ message that all $\cstate\in\textit{Concrete}(\hat{\sigma})$ are error states.
  The only obstacle in the way toward T-soundness is that $\textit{Concrete}(\hat{\sigma})$ could be empty, i.e., \(\sstate\)'s path condition \emph{unsatisfiable.}
  Otherwise, a \code{BugFound} message from the stream corresponds to a \emph{sound} bug.
  \gls{se} engines typically use off-the-shelf constraint solvers such as Z3 or CVC5 to determine if a path condition is unsatisfiable.

  \begin{figure}[tb]
    \centering
    \begin{minted}[escapeinside=//,mathescape=true, frame=single, framesep=5pt]{ocaml}
exception Bug
exception Termination

let report msg =
  match msg with
  | BugFound (path, _, _) ->
    if is_sat path then raise Bug
  | Finished -> raise Termination
  | _ -> ()

type answer = YES | NO

let has_bug p =
  try iter report (find_bugs p)
  with
    | Bug -> YES
    | Termination -> NO
    \end{minted}
    \Description{A T-sound and T-complete Bug Finder}%
    \caption{A T-sound and T-complete Bug Finder}%
    \label{fig:t_sound_oracle}
  \end{figure}
  
  Relative completeness, on the other hand, ensures that for any error state $\sigma$, the stream contains a message $\textit{BugFound}(\hat{\sigma})$ with $\hat{\sigma} \in \textit{Symbolic}(\sigma)$.
  The latter condition implies that \(\sstate\)'s path condition is satisfiable.
  Inconveniently, a \emph{terminating} bug finder can never be complete in general; the bug could always be hidden in the next state.
  Our solution to this problem is to implement a \emph{nonterminating but complete} symbolic tester.
  Our implementation, shown in \Cref{fig:t_sound_oracle}, builds on \code{find\_bugs} and uses a constraint solver \code{is\_sat} and a function \code{iter} for iterating over all elements of a stream.
  It only outputs ``sound'' bugs, terminates for finite symbolic executions, and continues \gls{se} as long as no bug is found or the process is interrupted.
  Since only terminating functions can be implemented in Coq, we have to resort to a ``pen-and-paper'' proof of $\texttt{has\_bug}$'s T-soundness and T-completeness.

  \begin{theorem}[T-soundness and T-completeness]
    \hspace*{10pt}$\texttt{has\_bug}(p) = YES \Leftrightarrow \textit{HasBug}(p)$
  \end{theorem}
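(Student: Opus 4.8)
The plan is to prove the two implications separately, treating \texttt{has\_bug} operationally: since it iterates \texttt{report} over the infinite stream \(\findBugs(p)\) and can only stop by raising \texttt{Bug} (returning \texttt{YES}) or \texttt{Termination} (returning \texttt{NO}), I would link the exception that is ultimately raised to the \gls{ltl} properties already established for \(\findBugs(p)\). Throughout I assume---as the constraint-solver discussion above does---that \texttt{is\_sat} decides satisfiability correctly, i.e.\ \(\texttt{is\_sat}(\pathcond)\) returns true exactly when \(\exists\cstore_0,\ \den{\pathcond}_{\cstore_0}=\syn{true}\).

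For the ``\(\Rightarrow\)'' direction (T-soundness) I would begin from the fact that \(\texttt{has\_bug}(p)\) can only return \texttt{YES} by raising \texttt{Bug}, which \texttt{report} does solely when it meets a message \(\texttt{BugFound}(\sst{\pathcond}{\sstore}{q})\) with \(\texttt{is\_sat}(\pathcond)\) true. Relative soundness (\cref{coq:relative_soundness}) gives \(\findBugs(p)\vDash\always\validStatus(p)\), so this message is valid, meaning every \(\cstate\in\concrete(\sst{\pathcond}{\sstore}{q})\) satisfies \(\textit{IsBug}(p,\cstate)\). The only gap is that \(\concrete(\sst{\pathcond}{\sstore}{q})\) could be empty; here correctness of \texttt{is\_sat} supplies a \(\cstore_0\) with \(\den{\pathcond}_{\cstore_0}=\syn{true}\), and setting \(\cstate\triangleq\st{\cstore_0\storeconcr\sstore}{q}\) I obtain \(\cstate\stateconcr{\cstore_0}\sst{\pathcond}{\sstore}{q}\) directly from the definition of \(\simeq\). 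Hence \(\concrete(\sst{\pathcond}{\sstore}{q})\) is nonempty, \(\textit{IsBug}(p,\cstate)\) holds, and \(\cstate\) witnesses \(\hasBug(p)\).

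For the ``\(\Leftarrow\)'' direction (T-completeness) I would unfold \(\hasBug(p)\) into some \(\cstate\) with \(\textit{IsBug}(p,\cstate)\) and apply relative completeness (\cref{coq:relative_completeness}) to get \(\findBugs(p)\vDash\eventually\symbolic(\cstate)\); thus at some position \(i\) the stream carries \(\texttt{BugFound}(\sstate)\) with \(\sstate\in\symbolic(\cstate)\), i.e.\ \(\cstate\stateconcr{\cstore_0}\sstate\) for some \(\cstore_0\). Since \(\cstore_0\) satisfies the path condition of \(\sstate\), correctness of \texttt{is\_sat} makes the solver answer true at position \(i\). The delicate remaining point is to show that the iteration actually reaches \(i\) and returns \texttt{YES} rather than \texttt{NO}: a \texttt{NO} arises only from a \texttt{Finished} message, but sound termination (\cref{coq:sound_termination}) yields \(\findBugs(p)\vDash\always(\texttt{Finished}\ltlimplies\always\texttt{Finished})\), so a \texttt{Finished} before \(i\) would force position \(i\) itself to be \texttt{Finished}, contradicting that it is \(\texttt{BugFound}(\sstate)\). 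Consequently \texttt{report} merely skips \texttt{Pending} messages and any earlier \texttt{BugFound} with an unsatisfiable path condition, and no later than position \(i\) it encounters a satisfiable \texttt{BugFound} and raises \texttt{Bug}, so \(\texttt{has\_bug}(p)=\texttt{YES}\).

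I expect the main obstacle to be the bridge between these declarative \gls{ltl} statements about \(\findBugs(p)\) and the imperative, exception-driven control flow of \texttt{has\_bug}: because \texttt{has\_bug} does not terminate it cannot be defined in Coq, so this final step is necessarily a pen-and-paper argument. Within it, the subtlest part is the ordering argument in the completeness direction---ruling out a premature \texttt{Finished}---which relies entirely on sound termination, together with the soundness direction's need to turn a satisfying assignment into a genuine concrete bug via the nonemptiness of \(\concrete(\sstate)\).
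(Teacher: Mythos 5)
Your proposal is correct and follows essentially the same route as the paper's (very terse) proof: characterize $\texttt{has\_bug}(p) = YES$ operationally as the existence of a $\texttt{BugFound}$ message with satisfiable path condition in the stream $\findBugs(p)$, then conclude via relative soundness (plus nonemptiness of \(\concrete(\sstate)\) from the satisfying assignment) and relative completeness (plus satisfiability from \(\symbolic(\cstate)\) membership). You merely fill in details the paper glosses over---notably the use of sound termination (\cref{coq:sound_termination}) to rule out a premature \texttt{Finished}, and the explicit construction of the concrete witness \(\st{\cstore_0\storeconcr\sstore}{q}\)---both of which are indeed needed for a rigorous pen-and-paper argument.
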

  \begin{proof}
  The answer is $\texttt{YES}$ iff there is a $\texttt{BugFound}(\sst{\varphi}{S}{p'})$ message in the stream $\texttt{find\_bugs}(p)$ such that $\varphi$ is satisfiable. 
  The result follows from relative soundness/completeness.
  \end{proof}

  \section{Implementations and Case Studies}%
  \label{sec:case_studies}


  \rev{In this section, we connect to \cref{q2} from the introduction:
  \emph{How can we engineer a reliable symbolic testing tool?}}
  We aim to propose a design for symbolic executors independently of the implementation language;
  for the solution to be of general interest, it should be able to ``live'' outside an interactive theorem prover.
  Thus, we derived \emph{two executable implementations} from the Coq implementation of \wise.
  First, we automatically extracted OCaml code, wrapped in a command line interface with a parser for IMP's concrete syntax.
  Second, we developed a functionally equivalent Python implementation, \pywise{}, again with a usable command-line frontend.
  We used Python generators to implement streams where the Coq version uses co-induction \rev{and re-implemented the Coq code with only small, purely syntactical changes.}

  Since the transformations from Coq to executable programs are not formally proven correct, we evaluated them with three numeric algorithms.
  We chose the computation of a number's factorial and integer square root and the greatest common divisor of two numbers.
  We annotated each program with assertions of their correctness; e.g., for an integer square root \(r\) of a number \(n\), \(n\) must be in the closed interval \([r^2;(r+1)^2]\).
  Next, we derived a ``buggy'' mutation of each program, e.g., turning \code{\syn{while} \_s > x \syn{do}} into \code{\syn{while} \_s {\bfseries <} x \syn{do}} in the square root example.
  Finally, we asserted that our symbolic executors find the inserted bug but report no bug for the original programs.
  In addition, we verified that \gls{se} terminates with a \texttt{Finished} message for the correct programs when we restrict the domain of the input number(s) to a finite range.
  To demonstrate that a breadth-first path selection is superior to depth-first (as in the original DART symbolic executor~\cite{godefroid.klarlund.ea-05}) in finding \emph{any} bug in a program, we added a \code{-{}-depth-first} option to \pywise{} to transition to depth-first search.
  As a result, the bug in the factorial implementation is not uncovered since \gls{se} gets ``trapped'' in infinite iterations of an early loop.

  \rev{
  \paragraph*{Engineering Tasks}

  The main novelty of this paper is the presentation of a \emph{mechanized} specification of the correctness of testing tools and the implementation of a \emph{mechanically verified} symbolic testing tool in a proof assistant.
  We abstracted the addressed problem in two noteworthy ways to facilitate this project.

  \begin{enumerate}
    \item Our target programming language IMP is a \emph{simple WHILE language.}
          Extending our framework to a richer language requires the specification of a richer semantics.
          For example, the semantics of ES5 JavaScript in the JSCert project~\cite{bodin.chargueraud.ea-14} is implemented in about 3,000 lines of Coq code.
          The correctness proof of JSRef, an executable reference interpreter for JSCert, spans 3,500 lines of code. 
    \item We disregard the question of \emph{efficiency,} which is orthogonal to our~\cref{q1,q2}.
          Yet, efficiency is crucial for the competitiveness of \gls{se}-based testing tools compared to random testing~\cite{boehme.paul-16}.
          An efficient symbolic executor might replace the breadth-first path selection with, e.g., a coverage-guided generational search~\cite{godefroid.levin.ea-08}.
          A considerable body of literature on efficient symbolic execution exists (cf.\ the survey by Baldoni et al.~\cite{baldoni.coppa.ea-18} for an overview).
          The role of our contribution is to show how the correctness of an \gls{se} system can be retained in the face of efficiency optimizations.
          Exchanging path selection, for example, requires proving that all reachable symbolic states are \emph{eventually} considered by the new selection mechanism (\cref{coq:run_steps_complete}).
  \end{enumerate}

  Extending our foundational work to efficient systems for richer languages is possible.
  Of course, that step requires significant (proof) engineering work.
  This paper shows the way.
  }
  
  \section{Conclusion and Future Work}

  Considering that testing is the predominant program verification technique, it may seem surprising that the correctness of bug finders has not received more attention.
  Probably, this is because fuzzers, the most popular automated testing tools, are naturally ``bug-sound''---they only report inputs that \emph{have already} made the program under test crash.
  However, fuzzers only find shallow bugs for programs with complex demands on their inputs or with code guarded by complex constraints.
  \glsfirst{se} is a whitebox technique simulating multiple program executions at once.
  As such, this technique can handle complex input constraints fully automatically.
  Compared to a fuzzer, correctly implementing a symbolic executor constitutes a much more significant challenge.
  We defined when a tester is ``bug-sound'' and ``bug-complete'' (i.e., a program is buggy if, and only if, a bug is reported), resulting in the notions of T-soundness and T-completeness.
  We chose a semantics for \gls{se} and demonstrated that it can be used as a foundation for symbolic bug finding.
  Finally, we implemented a symbolic bug finder (\wise) in Coq, proved it T-sound and T-complete, and extracted executable implementations in OCaml and Python.

  The resulting symbolic executors interpret programs in the simple programming language IMP\@.
  Connecting to our foundational work, we plan to support additional IMP features, such as functions and pointers, in our symbolic executor.
  Thus, we could discover common pitfalls in implementing their symbolic evaluation.
  More generally, \wise{} constitutes a solid basis for studying more advanced \gls{se} features, such as different path selection algorithms, constraint representations, or constraint-solving approaches.

  Is it realistic to expect a fully verified symbolic executor for an industrial programming language?
  We think it is---considering the existence of fully verified \emph{C compilers}.
  In any case, blackbox testing will never find all the errors in existing symbolic executors;
  but insights from implementing different programming language features in a framework will positively influence their design.

  \section*{Data Availability}
 
  Our \wise{} and \pywise{} prototypes and the documentation of the Coq sources can be accessed at
  \ifreview
  \begin{center}
    \url{https://anonymous.4open.science/r/WiSE/}\\
    \url{https://anonymousverifier.github.io/wise_doc/toc.html}
  \end{center}
  \else
  \begin{center}
    \url{https://github.com/acorrenson/WiSE/}\\
    \url{https://acorrenson.github.io/WiSE/}\\
  \end{center}
  \fi
  \pywise{} is also available on PyPI and can be installed via
  \begin{tightcenter}
    \texttt{pip install wise-se}
  \end{tightcenter} 

\begin{acks}
This work was supported by the European Research Council (\grantsponsor{1}{ERC}{https://erc.europa.eu/}) Grant HYPER (No.~\grantnum{1}{101055412}) and by \grantsponsor{2}{DFG}{https://www.dfg.de/} grant \grantnum{2}{389792660} as part of TRR 248.
Views and opinions expressed are however those of the authors only and do not necessarily reflect those of the European Union or the European Research Council Executive Agency.
Neither the European Union nor the granting authority can be held responsible for them.
A. Correnson carried out this work as a member of the Saarbr\"ucken Graduate School of Computer Science.
\end{acks}

  \bibliographystyle{ACM-Reference-Format}
  \bibliography{references}
\end{document}